  \def\equationautorefname~#1\null{Equation~(#1)\null}
\newtheorem{theorem}{Theorem}
\newcommand{\addalias}[3][theorem]{
  \newaliascnt{#2}{#1}
  \newtheorem{#2}[#2]{#3}
  \aliascntresetthe{#2}
  \expandafter\def\csname #2autorefname\endcsname{#3}
}
\def\01{\{0,1\}}
\newcommand{\trm}[1]{\left(#1\right)}
\newcommand{\ceil}[1]{\left\lceil#1\right\rceil}
\newcommand{\eps}{\varepsilon}
\renewcommand{\epsilon}{\varepsilon}
\newcommand{\ii}{\mathbf{i}} 
\let\ex = \expandafter
\let\ifnextchar\@ifnextchar
\newcommand{\newfunction}[2]{
\DeclareDocumentCommand{#1}{e_e^d()}{
{#2}%
\IfValueT{##1}{%
  _{##1}%
  }%
 \IfValueT{##2}{%
  ^{##2}%
  } %
\IfValueT{##3}{\trm{##3}}%
}
}
\newcommand{\newtextfunction}[2]{\newfunction{#1}{\operatorname{#2}}}
\newcommand{\newlistcommand}[4]{
\newcommand{#1}[2][#2]{
  ##1  ##2 \ifnextchar\bgroup{#1[#3] 
  }{ #4 } 
}
}
\newlistcommand{\ket}{\left|}{\middle\rangle\!\middle|}{\right\rangle}
\newlistcommand{\bra}{\left\langle}{\middle|\!\middle\langle}{\right|}
\newtextfunction{\sin}{sin}
\newtextfunction{\cos}{cos}
\newtextfunction{\tan}{tan}
\newtextfunction{\arcsin}{arcsin}
\newtextfunction{\arccos}{arccos}
\newtextfunction{\arctan}{arctan}
\newtextfunction{\asin}{asin}
\newtextfunction{\acos}{acos}
\newtextfunction{\atan}{atan}
\newtextfunction{\log}{log}
\newtextfunction{\loglog}{log\ log}
\newtextfunction{\diag}{diag}
\newtextfunction{\range}{range}
\newtextfunction{\imag}{im}
\newtextfunction{\rank}{rank}
\newtextfunction{\spec}{Spec}
\newtextfunction{\ker}{ker}
\newtextfunction{\spa}{span} 
\newtextfunction{\nnz}{nnz}
\newtextfunction{\sign}{sign}
\newtextfunction{\dim}{dim}
\newtextfunction{\dom}{dom}
\newtextfunction{\tr}{Tr}
\newtextfunction{\deg}{deg}
\newtextfunction{\arg}{Arg}
\newtextfunction{\argmax}{argmax}
\newfunction{\bigO}{\mathcal{O}}
\newfunction{\littleO}{o}
\newfunction{\bigOmega}{\Omega}
\newfunction{\bigTheta}{\Theta}
\newfunction{\bigOt}{\widetilde{\mathcal{O}}}
\newfunction{\bigOmegat}{\widetilde{\Omega}}
\newfunction{\bigThetat}{\widetilde{\Theta}}
\newtextfunction{\polylog}{polylog}
\newtextfunction{\poly}{poly}
\newcommand{\zo}{\{0,1\}}
\title{A Framework for Distributed Quantum Queries in the CONGEST Model}
\author{Joran van Apeldoorn\thanks{QuSoft \& IViR, University of Amsterdam. Supported by NWO/OCW, as part of QSC (024.003.037)} \and Tijn de Vos\thanks{University of Salzburg, supported by the Austrian Science Fund (FWF): P 32863-N. This project has received funding from the European Research Council (ERC) under the European Union's Horizon 2020 research and innovation programme (grant agreement No 947702).}}
 \newcommand{\authnote}[3]{{\color{#3} {\bf  #1:} #2}}
 \newcommand{\authnote}[3]{}
\begin{document}
\maketitle
\begin{abstract}
The Quantum CONGEST model is a variant of the CONGEST model, where messages consist of  $\bigO(\log(n))$ qubits. In this paper, we give a general framework for implementing quantum query algorithms efficiently in a Quantum CONGEST network, using the concept of parallel-query quantum algorithms. 

We apply our framework for distributed quantum queries in two settings: problems where data is distributed over the network, and graph theoretical problems where the network defines the input.
The first setting is slightly unusual in CONGEST but here our results follow almost directly from the quantum query setting. The second setting is more traditional for the CONGEST model but here our framework requires also some classical CONGEST steps to apply. 

In the setting with distributed data, we show how a network can pick one of $k$ dates for a meeting such that a maximum number of nodes is available, using $\bigOt(\sqrt{kD}+D)$ rounds, with $D$ the network diameter. The classical complexity is linear in $k$. We also give an efficient algorithm for element distinctness: if all nodes together holds a list of $k$ numbers, we show that the nodes can determine whether there are any duplicates in $\bigOt(k^{2/3}D^{1/3}+D)$ rounds. Classically this problem requires $\bigOmegat(k+D)$ rounds. We also generalize the protocol for the distributed Deutsch-Jozsa problem from the two-party setting considered by Buhrman, Cleve, and Wigderson~\cite{BCW98} to general networks. This gives a novel separation between exact classical and exact quantum protocols in the CONGEST model.

In the setting where the input is the network structure itself, our framework almost directly allows us to recover the  $\bigO(\sqrt{nD})$ round diameter computation algorithm of Le Gall and Magniez~\cite{le2018sublinear}.
We extend this approach to also compute the radius in the same number of rounds, and to give an $\eps$-additive approximation of the average eccentricity in $\bigOt(D^{3/2}/\eps)$ rounds.

Finally, we give the first quantum speedups over classical CONGEST for the problems of cycle detection and girth computation. We detect whether a graph has a cycle of length at most $k$ in $\bigO(k+(kn)^{1/2-1/\Theta(k)})$ rounds. For girth computation, we give an $\bigOt(g+(gn)^{1/2-1/\Theta(g)})$ round algorithm for graphs with girth $g$, beating the classical $\bigOmega(\sqrt{n})$ round lower bound by Frischknecht, Holzer, and Wattenhofer~\cite{FHW12}.
\end{abstract}

\vspace{1em}
\noindent\textbf{Acknowledgements:} We would like to thank Sebastian Forster for the fruitful discussions on the topic and we would like to thank the anonymous reviewer who pointed out that we could remove the dependence on the diameter in our results for cycle finding and girth computation. 

\section{Introduction}

In distributed computing, we consider a network of processors that communicate in synchronized rounds to perform a computation. Up until recently, only classical processors and communication were considered. However, as quantum computing and communication become ever more promising fields, interest has been sparked in their application to distributed computing. Quantum computing and communication provide a stronger, and in a way more physically accurate, model of computation, allowing for speedups in many settings. 

For distributed computing such an advantage was first shown for the case for exact leader election \cite{tani2005exact}, i.e., electing a leader with zero error probability, a problem that is not possible to solve using randomized communication but has a polynomial time algorithm in the quantum setting. Although promising, later it was shown that when there is no bandwidth constraint, as in the \emph{LOCAL} model, many fundamental problems do not allow for a speedup \cite{GKM09}. When there \emph{is} a bandwidth constrained however, quantum communication can improve round complexities significantly. This model is called the \emph{Quantum CONGEST model}, and is the model considered in this paper.

\paragraph{The model.} The Quantum CONGEST model consists of a network of processors, which communicate in synchronous rounds. In each round, a processor can send information to its neighbors over a non-faulty link with limited bandwidth. We model the network of processors by a graph $G=(V,E)$, where we identify the processors with the nodes and the communication links with the edges. We write $n=|V|$, $m=|E|$, and $D$ for the diameter of the graph. Initially, processors do not know the complete network topology, but only who their neighbors are. Each node has a unique identifier of size $\bigO(\log(n))$, initially only known by the node itself and its neighbors. Computation in this model is done in rounds. At the start of each rounds, each node can send a different messages to each of its neighbors, and receive a message from each of them. The messages are of size at most $\bigO(\log(n))$ qubits, allowing identifiers to be communicated in a single round. Before the next round, each node can perform (unlimited) internal quantum computation. We measure the efficiency of an algorithm by the number of rounds. At the start of the computation the nodes do \emph{not} share any entanglement. 

\paragraph{Prior work.} The first positive results in the Quantum CONGEST model are in computing the diameter in $\bigO(\sqrt{nD})$ rounds \cite{le2018sublinear} and finding triangles in $\bigOt(n^{1/4})$ rounds \cite{izumi2019quantumtriangle}\footnote{In the abstract and introduction we use $\bigOt(\cdot)$ notation to hide poly-logarithmic factors in all relevant parameters, not only those appearing in the $\bigOt(\cdot)$.}. On the negative side, Elkin, Klauck, Nanongkai, and Pandurangan~\cite{elkin2014can} showed that a number of fundamental problems, such as minimum spanning tree, minimum cut, and shortest paths, can not be solved any faster in Quantum CONGEST than in its classical counterpart. We also note that Le Gall, Nishimura, and Rosmanis~\cite{le2019quantum} showed a separation of $O(1)$ versus $O(n)$ in the LOCAL model for a sampling problem.

\paragraph{Our contributions.} In this paper, we give a general framework for implementing quantum query algorithms efficiently in a distributed network. This allows us to take any of the many positive results in quantum query complexity and apply them to problems in the Quantum CONGEST model. 
While classical algorithms often have to query the entire input, many quantum algorithms require a sub-linear number of queries, making query complexity an interesting metric in the quantum setting. For example, one of the most famous examples of a quantum improvement comes from Grover's search algorithm, which searches an unordered input of size $N$ using only $\bigO(\sqrt{N})$ queries to that input.  An important ingredient in these speedups is the ability to query the input in \emph{superposition}.

In a CONGEST setting were a designated leader has been chosen, we can ask this leader to run some query algorithm that would solve the problem we are interested in. As the leader might not have all the information required to solve the problem (otherwise the problem would be trivial to solve in the CONGEST model!), each query requires help from the network to implement. We give a general strategy for implementing such queries efficiently for a quantum query algorithm, where the queries can be in a superposition over different values. A straight forward implementation of this technique would lead to most nodes sitting idle most of the time, as they wait for some other part of the network to finish the query. We circumvent this by turning to parallel-query quantum algorithms, where queries need to be made in simultaneous batches. While  parallel-query quantum algorithms are a natural generalization of sequential quantum query algorithms, few results are known. In \autoref{sc:parallel} we give an overview of the few results that we are aware of, as well as giving alternative proofs and some small improvements for some of these results.

We apply our framework for distributed quantum queries in two settings: problems where data is distributed over the network, and graph theoretical problems where the network defines the input.
The first setting is slightly unusual in CONGEST but here our results follow almost directly from the quantum query setting. The second setting is more traditional for the CONGEST model but here our framework requires also some classical CONGEST steps to apply. 

Concerning distributed data problems, we show the following bounds on the round complexity:
\begin{itemize}
    \item We show that $\bigOt(\sqrt{kD}+D)$ rounds are sufficient for the meeting scheduling problem. Here each node has a private ``calendar'' and the goal is to pick one out of $k$ possible days at which most nodes are ``available''. 
    \item We show that $\bigOt(k^{2/3}D^{1/3}+D)$ rounds are sufficient to solve a distributed version of element distinctness. Here the nodes together hold a list of $k$ elements, and the goal is to find two elements in the list that are the same (or conclude that no such pairs exist). 
    \item Buhrman, Cleve, and Wigderson~\cite{BCW98} previously extended the Deutsch-Jozsa problem~\cite{DJ92} to a two-party communication setting. We further extend this to arbitrary networks, leading to a novel proof of an exponential separation between exact (i.e., with zero error probability) classical CONGEST algorithms, and exact Quantum CONGEST algorithms. This however leaves the interesting open question of such a separation also exists for a natural problem and for algorithms with a small error probability.
\end{itemize}
To prove the corresponding lower bounds for the these problems, we use reductions from two-party communication complexity. In particular, for the first two we make use of the intersection problem (also known as the disjointness problem), where two players receive $x^{(A)},x^{(B)}\in\zo^k$ respectively and they have to determine whether there exists an index $i$ such that $x^{(A)}=x^{(B)}=1$. This takes $\Omega(k)$ rounds of communication classically~\cite{KS87,Razborov90}, where each round consists of communicating $\bigO(1)$ bits. For the third problem, we use the two-player lower bound of Deutsch-Jozsa, which is also $\Omega(k)$ rounds. This leads to a lower bound of $\bigOmega(k/\log(n)+D)$ rounds for all three problems in the classical CONGEST model. In particular, for Deutsch-Jozsa this means we obtain an exponential speedup with respect to the classical CONGEST model. Moreover, as there is a trivial lower bound of $\bigOmega(D)$ rounds in the quantum CONGEST model, our solution is optimal up to log-factors. For meeting scheduling and element distinctness, an exponential speedup is not possible. In fact, we show both problems require $\bigOmega(\sqrt[3]{kD^2}+\sqrt{k})$ rounds in the quantum CONGEST model. Note that here we only consider $k\geq D$, since the complexity of $k< D$ is $\Theta(D)$, where the upper bound is achieved by simply streaming the entire input to the other party.

For the graph theoretical problems, we show the following bounds on the round complexity:
\begin{itemize}
    \item We directly recover the $\bigO(\sqrt{nD})$ round algorithm for diameter computation by Le Gall and Magniez~\cite{le2018sublinear}. We also extent these idea to radius computation and approximate average eccentricity. For the later problem we give an $\bigOt(D^{3/2}/\eps)$ round algorithm that find an $\eps$-additive approximation of the average eccentricity.
    \item We show that we can detect whether a graph has a cycle of length at most $k$ in $\bigO(k + (kn)^{\frac{1}{2}-\frac{1}{4\ceil{k/2}+2}})$ rounds in the Quantum CONGEST model. 
    \item  We also show that the girth $g$ of a graph can be computed in \[
    \bigOt(g+(gn)^{\frac{1}{2}-\frac{1}{\Theta(g)}})
    \]
    rounds, where any classical algorithm would require $\bigOmega(\sqrt{n})$ rounds. 
\end{itemize}

We finish the paper with a short discussion about implementing quantum algorithms that do not use queries to a standard oracle. In particular, we show how to implement amplitude amplification, phase estimation, and amplitude estimation in the quantum CONGEST model. All of these algorithmic building blocks work with more general black-box (quantum) subroutines that do not need to take the form of a standard input oracle.

\paragraph{Notation and conventions.}
Following conventions from quantum computing, we prove our results with success probability at least $2/3$. In our algorithms, there will always be some central leader that can combine the results of multiple independent runs to boost this to a success probability of $1-n^{-c}$ at the cost of an extra $\log(n)$-factor.  

Throughout this paper, we assume that the reader is familiar with the basic concepts of quantum computing, for a good overview see for example~\cite{dewolf:notes}.

 \section{Parallel Quantum Queries}\label{sc:parallel}
 Before we consider distributed computing, we look at quantum query complexity. The standard model of quantum query complexity is fully adaptive, all queries happen in sequence. For many quantum query upper bounds this adaptiveness is key, for example, the speedup achieved by Grover's algorithm disappears when all queries have to be performed in parallel~\cite{zalka:optimalgrover}. More generally, non-adaptive quantum query algorithms are very limited in the speedup they can achieve~\cite{montanaro:nonadaptive}.
 
 More recently, attention has been given to parallel quantum queries~\cite{jeffery:parallelquantum,zalka:optimalgrover,grover:parallel,giurgicatiron:ampest}, a natural interpolation between the non-adaptive and fully-adaptive setting. Here we make queries in batches of $p$ simultaneous queries. 

 \begin{definition}[Parallel-query Quantum Algorithm]  \label{def:batched-query}
Let $O$ be an input oracle for a query problem. A $(b,p)$-parallel-query quantum algorithm is a quantum query algorithm that contains $b$ uses of $O^{\otimes p}$, i.e., it contains $b$ batches of $p$ parallel queries.
\end{definition}
 For completeness we reprove a few results about parallel-query quantum algorithms from the literature, with a few small improvements.

 \begin{lemma}[Parallel Grover's Search]\label{lm:batched-grover} 
    Let $x\in \zo^k$ be an input string with $|x| = t$. Then there exists a $\left(\bigO(\ceil{\sqrt{k/(tp)}}),p\right)$-parallel-query quantum algorithm that finds an index $i$ such that $x_i=1$, or outputs that no such index exists. There also exists a $(\bigO(\sqrt{kt/p}+t),p)$-parallel-query quantum algorithm that finds all such indices. Both algorithms succeed with probability at least $2/3$.
\end{lemma}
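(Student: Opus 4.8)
The plan is to run, in parallel, one independent Grover search (amplitude amplification) on each block of a balanced partition of $[k]$. We may assume $p\le k$, since otherwise a single batch of $p\ge k$ queries reads all of $x$. Partition $[k]$ into $p$ blocks $B_1,\dots,B_p$, each of size $\ceil{k/p}$, and let $t_\ell$ be the number of $1$s of $x$ inside $B_\ell$, so that $\sum_{\ell=1}^p t_\ell=t$. A single use of $O^{\otimes p}$ applies, simultaneously in every block, the phase oracle of $x$ restricted to that block; composing it with the query-free reflection about the uniform superposition over $B_\ell$ realizes one Grover iterate in all $p$ blocks at the cost of one batch of $p$ parallel queries. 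After $r$ such iterates, measuring the $\ell$-th block's register returns an index $i\in B_\ell$ with $x_i=1$ with probability $\sin^2\!\left((2r+1)\theta_\ell\right)$, where $\sin^2\theta_\ell=t_\ell/\ceil{k/p}$. One more batch of $p$ queries then verifies all $p$ candidates at once, so the procedure never reports a false positive and the case $t=0$ costs nothing extra.

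The heart of the argument is the choice of $r$. Taking $r$ of order $\sqrt{k/(tp)}$, one checks that $\sum_{\ell=1}^p\sin^2\!\left((2r+1)\theta_\ell\right)=\bigOmega(1)$: if $(2r+1)\theta_\ell=\bigO(1)$ for every block, then these terms are together at least a constant times $r^2\sum_\ell\sin^2\theta_\ell=r^2\cdot t/\ceil{k/p}=\bigOmega(1)$, whereas any single block with a larger angle already contributes $\bigOmega(1)$ by itself; a symmetric estimate shows the sum is also $\bigO(1)$. Since the $p$ per-block measurements are independent, the probability that some block returns a marked index is then $1-\prod_\ell\bigl(1-\sin^2\!\left((2r+1)\theta_\ell\right)\bigr)=\bigOmega(1)$. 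To remove the dependence on the unknown weight $t$, I would wrap this in the usual BBHT-style exponential search: run stages $s=0,1,2,\dots$ with iteration bound $2^s$, within each stage choose the actual number of iterates uniformly at random below $2^s$ (to avoid over-rotation), verify after the stage, and stop as soon as a marked index is confirmed; $\bigO(1)$ independent repetitions per stage push the success probability to $2/3$. The per-stage costs form a geometric progression, so the total number of batches is dominated by the last stage reached, which is $\bigO(\ceil{\sqrt{k/(tp)}})$ as soon as a marked index exists, and $\bigO(\sqrt{k/p})$ — the same expression with $t$ replaced by $1$ — when one must certify that none does.

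For the second statement I would iterate the algorithm of \autoref{lm:batched-grover} just obtained: repeatedly call it on $x$ with all previously found indices masked out as $0$s, append each newly found index to the output, and halt once it reports that no marked index remains. While $j$ marked indices are still missing, one call uses $\bigO(\ceil{\sqrt{k/(jp)}})$ batches and succeeds with constant probability, so a standard amortized (BBHT-style) analysis bounds the expected total number of batches, over all calls and over the repetitions needed to reach overall error $1/3$, by $\bigO(\sum_{j=1}^{t}\ceil{\sqrt{k/(jp)}})=\bigO(\sqrt{kt/p}+t)$; Markov's inequality turns this into the claimed bound with probability $2/3$, and boosting the reliability of the ``no index remains'' test guards against premature termination.

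The step I expect to be the main obstacle is the estimate in the second paragraph: bounding $\sum_\ell\sin^2\!\left((2r+1)\theta_\ell\right)$ from below \emph{and} from above uniformly over every adversarial split $(t_\ell)_\ell$ of the weight $t$ among the blocks, and checking that the random-iteration-count device keeps the per-stage success probability bounded away from $0$ at \emph{all} stages, so that the geometric bound on the number of batches genuinely survives. The remaining ingredients — the block decomposition, the verification batch, and the iteration for ``find all'' — are routine bookkeeping around the sequential batched-query primitive of \autoref{def:batched-query} and standard amplitude amplification.
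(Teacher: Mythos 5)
Your proposal is correct in outline but takes a genuinely different route from the paper. The paper deliberately avoids the block partition: it runs a \emph{single} Grover search over the $\binom{k}{p}$ subsets of size $p$, where testing whether a subset is marked (contains an $i$ with $x_i=1$) costs one batch of $p$ parallel queries; since the fraction of marked subsets is $1-\binom{k-t}{p}/\binom{k}{p}=\bigOmega(pt/k)$, the expected-query version of Grover with an unknown number of solutions~\cite{boyer:search} immediately gives $\bigO(\sqrt{k/(pt)})$ expected batches, and Markov converts this to a worst-case bound. That reduction to ordinary sequential Grover sidesteps both the over-rotation issue and the unknown-$t$ issue in one stroke. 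You instead keep the partition into $p$ blocks --- the very approach the paper's proof explicitly rejects because requiring \emph{all} blocks to succeed costs a $\log p$ factor --- but rescue it by requiring only \emph{one} block to succeed and aggregating the per-block success probabilities; this works and is a nice observation, but it is exactly where your write-up has a soft spot. For a fixed iterate count $r$, the claim that ``any single block with a larger angle already contributes $\bigOmega(1)$ by itself'' is false: a block with $(2r+1)\theta_\ell$ near a multiple of $\pi$ contributes essentially nothing. The claim only becomes true after averaging over the random iterate count, via $\frac{1}{m}\sum_{j=0}^{m-1}\sin^2((2j+1)\theta)=\frac{1}{2}-\frac{\sin(4m\theta)}{4m\sin(2\theta)}$, which is $\bigTheta(m^2\theta^2)$ for $m\theta$ small and at least $1/4$ once $m\geq 1/\sin(2\theta)$; summing over blocks (with independent randomness per block, idle blocks padded by dummy queries) gives the $\bigOmega\left(\min\left(1,\,m^2tp/k\right)\right)$ aggregate success probability you need at $m=\bigTheta(\sqrt{k/(tp)})$. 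So the randomized iteration count is not merely a device for unknown $t$ --- it is load-bearing for the central estimate, and your proof must present it that way (the claimed $\bigO(1)$ upper bound on the sum, by contrast, is neither needed nor true in general). With that repair your ``find one'' argument, and the iterated masking argument for ``find all,'' both go through and match the paper's bounds; the paper's subset-search formulation is simply the shorter path to the same place.
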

\begin{proof}
  A simple approach would be as follows: split the input string into $p$ parts of size $k/p$ and apply Grover's algorithm to each part. However, to ensure with constant probability all runs of Grover succeed, each run had to fail with probability at most $1/(3p)$, introducing an extra $\log(p)$ factor. This was the approach taken in previous works~\cite{zalka:optimalgrover,grover:parallel}.

  To get around this we search over subsets of size $p$ of the input instead. We first focus on the expected number of queries for finding one element if there are exactly $t$ marked elements (but we do not know this value). If $p\geq k/t$ then a random subset of $p$ elements contains a marked element and hence the expected number of queries to find one is constant. Therefore we assume $p< k/t$.
  
  There are $\binom{k}{p}$ subsets of the input, out of which $\binom{k-t}{p}$ \emph{do not} contain a marked element. Hence, using that $\binom{a}{b} = \bigTheta(\frac{a^b}{b!})$, we know that Grover's algorithm requires an expected number of queries equal to~\cite{boyer:search} 
  \begin{align*}
    \bigO(\sqrt{\frac{\binom{k}{p}}{\binom{k}{p} - \binom{k-t}{p}}}) &=   \bigO(\sqrt{\frac{k^p}{k^p - (k-t)^p}})\\
    &=   \bigO(\sqrt{\frac{1}{1 - \trm{1-\frac{t}{k}}^p}})\\
    &=   \bigO(\sqrt{\frac{1}{\frac{pt}{k}}})\\
    &=   \bigO(\sqrt{\frac{k}{pt}}),
  \end{align*}
  where we used that $1-(1-f)^p \geq fp$ if $p < 1/f$.
  Hence the expected number of parallel queries to find the first element is $\bigO(\ceil{\sqrt{\frac{k}{pt}} })$.

  Summing this over all $t$ elements, with decreasing values for $t$ we get
  \[
  \bigO(\sum_{\tau=1}^t \ceil{\sqrt{\frac{k}{p\tau}}} ) =  \bigO(\sqrt{\frac{kt}{p}} + t)
  \]
  for the expected number of parallel queries if $t$ elements were marked. As the algorithm is agnostic to the value of $t$, we may use our given upper bound to get an upper bound on the expected number of queries in our setting.
  
  Stopping any of the algorithms after $3$ times their expected number and applying Markov's inequality gives use the same worst-case bound, with error probability $\leq 1/3$.
\end{proof}

A well known extension of Grover's search algorithm is the minimum finding algorithm by D\"urr and H{\o}yer~\cite{durr1996QMinimumFinding}. This can be parallelized in a similar manner, or by directly using the expected number of queries bound from the proof above in their proof.  Hence, we will not give an explicit proof here.
\begin{lemma}[Parallel minimum finding]\label{lm:batched-minfinding} 
    Let $x\in [N]^k$ be an input string. There exists a $\left(\bigO(\ceil{\sqrt{k/p}}),p\right)$-parallel-query quantum algorithm that finds an index $i$ such that, with probability at least $2/3$, $x_i=\min_j x_j$. Equivalently, the algorithm can be modified output the maximum as well.
    
    If there are at least $\ell$ elements that attain the minimum (or equivalently maximum) than there eqists a $\left(\bigO(\ceil{\sqrt{\frac{k}{\ell p}}}),p\right)$-parallel-query quantum algorithm for the problem.
\end{lemma}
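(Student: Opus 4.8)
The plan is to parallelize the minimum-finding algorithm of D\"urr and H{\o}yer~\cite{durr1996QMinimumFinding}, replacing every invocation of Grover's search inside it by a call to the batched search of \autoref{lm:batched-grover}. Recall the structure of their algorithm: pick a threshold index $y$ uniformly at random in $[k]$; then repeatedly run Grover's search on the Boolean string that marks exactly the indices $j$ with $x_j<x_y$, replacing $y$ by any marked index found; once a fixed query budget has been used up, output $y$. I would keep this structure verbatim but implement each search as a $p$-batched search: when the current threshold has rank $r$, so that $t=r-1$ indices are marked, the expected-cost bound established in the proof of \autoref{lm:batched-grover} shows that a marked index is found using an expected $\bigO(\ceil{\sqrt{k/(tp)}})$ batches of $p$ parallel queries each.

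For the analysis I would reuse D\"urr and H{\o}yer's accounting of the ranks visited by the threshold. These form a strictly decreasing sequence $r_0>r_1>\dots>r_m=1$, each step landing essentially uniformly among the currently marked indices, and they show that $\E[\sum_i\sqrt{k/(r_i-1)}]=\bigO(\sqrt{k})$. Substituting the batched per-phase cost and using $\ceil{z}\le z+1$, the expected total number of batches becomes
\begin{align*}
\bigO(\E[\sum_i \ceil{\sqrt{\tfrac{k}{(r_i-1)p}}}])
&= \bigO(\tfrac{1}{\sqrt{p}}\,\E[\sum_i \sqrt{\tfrac{k}{r_i-1}}] + \E[m]) \\
&= \bigO(\sqrt{k/p}) + \bigO(\log k)
= \bigO(\ceil{\sqrt{k/p}}),
\end{align*}
where $\E[m]=\bigO(\log k)$ is the expected number of phases, and the extra $+1$ per phase matters only once $(r_i-1)p\ge k$, in which case that phase costs exactly one batch anyway; the $\log k$ is dominated by $\ceil{\sqrt{k/p}}$ unless $p$ is within a polylogarithmic factor of $k$, where instead $\bigO(1)$ batches suffice to read the entire input. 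As in the proof of \autoref{lm:batched-grover}, one sets the batch budget to a constant multiple of this expectation; Markov's inequality then bounds by $1/3$ the probability that the threshold has not yet reached the true minimum, and once it has it never moves again, so the output is correct with probability at least $2/3$. Reversing the comparison gives the maximum-finding variant.

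For the version with at least $\ell$ indices attaining the minimum I would subsample before running the above. Draw a uniformly random subset $S\subseteq[k]$ of size $\ceil{k/\ell}$; with probability bounded below by an absolute constant, $S$ contains an index attaining $\min_j x_j$, and on that event $\min_{j\in S}x_j=\min_j x_j$. Running the first part of the lemma on the restriction of $x$ to $S$ — each batched query to the restriction being realized by a batched query to $x$ — then uses $\bigO(\ceil{\sqrt{|S|/p}})=\bigO(\ceil{\sqrt{k/(\ell p)}})$ batches and returns an index of $S$ achieving $\min_{j\in S}x_j$; a constant number of independent repetitions, returning the index with smallest value seen, boosts the success probability to $2/3$. (Equivalently, one can run the parallelized D\"urr--H{\o}yer process directly, using the variant of their analysis in which $\ell$ elements are already marked good.)

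The step I expect to be the main obstacle is exactly the middle paragraph: faithfully translating D\"urr and H{\o}yer's \emph{expected-total-query} bound into an \emph{expected-total-batch} bound, and checking that the ceilings in the per-phase cost of \autoref{lm:batched-grover} — which behave like a constant as soon as $p$ exceeds $k/(r_i-1)$ — contribute nothing beyond the claimed $\bigO(\ceil{\sqrt{k/p}})$, up to the polylogarithmic slack already present in their analysis. The remaining ingredients, the Markov truncation from an expected to a worst-case batch count and the probability amplification, are routine and mirror the proof of \autoref{lm:batched-grover}.
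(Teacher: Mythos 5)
The paper itself gives no proof of this lemma: it only remarks that D\"urr--H{\o}yer's algorithm ``can be parallelized in a similar manner, or by directly using the expected number of queries bound from the proof above in their proof.'' Your write-up is a faithful elaboration of the second of these two routes, and your handling of the $\ell$-marked case by subsampling a set of size $\ceil{k/\ell}$ and repeating a constant number of times is fine.

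There is, however, one parameter regime where your analysis does not deliver the stated bound, and the patch you propose for it fails. Running D\"urr--H{\o}yer on the original domain $[k]$ with batched inner searches gives an expected batch count of $\bigO(\sqrt{k/p}+\E[m])$, where $m$ is the number of threshold updates; the rank-$r$ element is visited with probability $1/r$, so $\E[m]=\Theta(\log k)$, and in particular the phases whose per-phase cost is forced up to one whole batch by the ceiling (those with $(r-1)p\ge k$) already contribute $\sum_{r\ge k/p}1/r=\Theta(\log p)$ expected batches. This additive logarithm is dominated by $\ceil{\sqrt{k/p}}$ only when $p\le k/\log^2 k$. In the complementary regime your fallback of reading the entire input costs $\ceil{k/p}$ batches, which for, say, $p=k/\log k$ is $\Theta(\log k)$ rather than the claimed $\bigO(\sqrt{\log k})$. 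This regime is irrelevant for every application in the paper (where $p=D\ll k$), but to obtain the lemma as stated for all $p$ you should take the first route instead, mirroring the proof of \autoref{lm:batched-grover}: run D\"urr--H{\o}yer over the $\binom{k}{p}$ subsets of size $p$, the value of a subset being the minimum of its entries, so that a $\Theta(p/k)$ fraction of the domain attains the optimal value. One oracle call on this domain is exactly one batch of $p$ parallel queries, the expected number of Grover iterations is $\bigO(\sqrt{k/p})$, and the expected number of threshold updates drops to $\bigO(1+\log(k/p))=\bigO(\ceil{\sqrt{k/p}})$, which absorbs the per-phase ceilings.
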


Next, we consider the problem of element distinctness.
\begin{definition}[Element distinctness] The \emph{element distinctness problem} is the problem of deciding whether an input string $x\in [N]^k$ contains the same element twice, and if so, finding the matching elements.
\end{definition}
Recently, Jeffery, Magniez, and de Wolf~\cite{jeffery:parallelquantum} gave an optimal $\left(\bigO(\ceil{\left(\frac{k}{p}\right)^{2/3}}),p\right)$-parallel-query quantum algorithm for this problem. They modified the quantum walk algorithm by Ambainis~\cite{Ambainis07} to walk over multiple Johnson graphs at the same time. Below we reprove their result with a slightly different approach, where $p$ classical random walk steps are taken to create a single quantum walk step. We will assume knowledge of the quantum walk framework used and the non-parallel quantum algorithm for element distinctness, for a good overview see for example Section~8 of~\cite{dewolf:notes}.
\begin{lemma}\label{lm:element-distinctness}
  Let $x\in [N]^k$ be a string such that at least one disjoint pairs of indices $(i,j)$ exists with $i\neq j$ and $x_i = x_j$. There exists a $\left(\bigO(\ceil{\frac{k^{2/3}}{p^{2/3}}}),p\right)$-parallel-query quantum algorithm that, with probability at least $2/3$, outputs one such pair. 
\end{lemma}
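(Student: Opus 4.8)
The plan is to adapt Ambainis's quantum walk algorithm for element distinctness~\cite{Ambainis07} to the parallel setting by making each walk step correspond to $p$ elementary moves, and then to run it inside the usual quantum walk search framework (see Section~8 of~\cite{dewolf:notes}). Recall the non-parallel algorithm: it walks on the Johnson graph $J(k,r)$, whose vertices are the $r$-subsets $S\subseteq[k]$ carried together with the stored values $\set{x_i : i\in S}$ in a history-independent data structure, and a vertex is \emph{marked} when its stored values contain a repeat. The relevant parameters are: a spectral gap $\delta=\Theta(1/r)$, a lower bound $\epsilon=\Theta(r^2/k^2)$ on the fraction of marked vertices (a single collision pair $(i,j)$ already makes $S$ marked whenever $\set{i,j}\subseteq S$, which happens with probability $\binom{k-2}{r-2}/\binom{k}{r}$), a setup cost $\mathsf{S}=O(r)$ queries, an update cost $\mathsf{U}=O(1)$ queries, and a checking cost $\mathsf{C}=0$ queries. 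The framework then solves the problem in $O\!\big(\mathsf{S}+\tfrac{1}{\sqrt{\epsilon}}(\tfrac{1}{\sqrt{\delta}}\mathsf{U}+\mathsf{C})\big)=O\!\big(r+k/\sqrt{r}\big)$ queries, optimized at $r=k^{2/3}$.

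To parallelize, I would keep the same vertex set and marking but replace the single-swap walk by the \emph{distance-$p$ walk} on $J(k,r)$: from $S$, move to a uniformly random $r$-set $S'$ whose symmetric difference with $S$ has size $2p$, i.e.\ drop a uniformly random $p$-subset of $S$ and add a uniformly random $p$-subset of $[k]\setminus S$. One quantized step of this walk is prepared directly — a uniform superposition over the $p$ elements to drop, tensored with a uniform superposition over the $p$ elements to add — and the only queries it needs are to the $p$ newly added indices, so a single walk step costs exactly one batch of $p$ parallel queries. The setup now reads the values of a uniformly random $r$-set in $\ceil{r/p}$ batches, and checking is still query-free. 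The one genuinely new ingredient is the spectral gap: the distance-$p$ graph lies in the Johnson association scheme, a short computation with its eigenvalues (Eberlein polynomials) shows that the first nontrivial eigenspace carries eigenvalue $1-\tfrac{pk}{r(k-r)}$ and is the bottleneck, so the gap is $\delta_p=\Theta(p/r)$ for $p\le r\le k/2$ (passing to the lazy walk if one wants all eigenvalues nonnegative, at no extra cost). Since $\epsilon$ is unchanged, the framework now gives $O\!\big(\ceil{r/p}+\tfrac{k}{r}\sqrt{r/p}\big)=O\!\big(\ceil{r/p}+\tfrac{k}{\sqrt{rp}}\big)$ batches of $p$ queries.

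It remains to pick $r$ and handle the ranges of $p$. For $p\le k/8$ take $r=\ceil{k^{2/3}p^{1/3}}$, which satisfies $p\le r\le k/2$; then both terms are $\Theta(k^{2/3}/p^{2/3})$, giving $O(\ceil{k^{2/3}/p^{2/3}})$ batches. For $p>k/8$ we have $\ceil{(k/p)^{2/3}}=\Theta(1)$, and it suffices to read the whole input in $\ceil{k/p}=\Theta(1)$ batches and decide classically. Measuring the state output by the walk search yields a marked $r$-set together with its stored values, so a colliding pair is recovered with no further queries; if several collision pairs exist the marked fraction only grows, so running the search with the fixed lower bound $\epsilon_0=\Theta(r^2/k^2)$ (equivalently, a geometric search over guesses for $\epsilon$) costs only a constant factor, and boosting the success probability to $2/3$ is standard.

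The step I expect to be the main obstacle is making the parallel walk step rigorous: one must verify that a single quantized distance-$p$ move can be prepared coherently and uncomputed using only the $p$ queries to the inserted elements — which, exactly as in Ambainis's algorithm, forces the stored data to live in a history-independent structure — and one must justify $\delta_p=\Theta(p/r)$, i.e.\ carry out the Johnson-scheme eigenvalue computation and confirm that the first nontrivial eigenspace is the bottleneck. Once these are settled, the remainder is the standard quantum walk search accounting together with the optimization over $r$ above.
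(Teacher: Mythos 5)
Your proposal is correct and takes essentially the same route as the paper's proof: a quantum walk over $z$-subsets of the Johnson graph with $z=k^{2/3}p^{1/3}$, setup in $O(z/p)$ batches, one batch of $p$ parallel queries per update, free checking, marked fraction $\Theta(z^2/k^2)$, spectral gap $\Theta(p/z)$, and the large-$p$ case ($p=\Omega(k)$) handled separately. The only difference is cosmetic: instead of your distance-exactly-$p$ walk, the paper uses the $p$-th power of the single-swap walk, so the gap bound $1-(1-\delta)^p=\Omega(p\delta)$ is immediate and the Eberlein-polynomial computation you flag as the main remaining obstacle is not needed.
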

\begin{proof}
  We will use the same method as Ambainis, but rebalance the subsets over which the walk runs slightly, and we perform multiple classical steps in parallel.
  
We first consider the probability that a subset of the indices of size $z$ contains at least one of the pairs.  The probability that a specific pair is fully in the subset is larger than $\eps=\frac{z^2}{k^2}$. 
Hence, if $p \geq \frac{k}{8}$, then one parallel query suffices to fully query a random subset of size $z = \frac{k}{8}$ and find a pair with constant probability, so we are done. Therefore, we assume $p<\frac{k}{8}$ for the rest of the proof.
  
   For this case, we use a quantum random walk. Consider the Johnson graph $J(k,z)$ (for some $z$ to be determined later), which has vertices labeled with $z$-sized subsets of $[k]$, and edges between subsets that differ by one replacement. This graph has spectral gap $\delta = \Omega(\frac{1}{z})$ (for $z \leq k/2$) \cite{brouwer:spectra}. Consider the $p$th power of this graph, i.e., the graph corresponding to $p$ steps of a random walk. Its spectral gap is at least $1-(1-\delta)^p = \bigOmega(p\delta) = \bigOmega(p/z)$ (as long as $p<\frac{1}{\delta} = z$).  During the walk we will keep track of the indices in a subset and their values. Setup, denoted by $S$, requires $\bigO(z/p)$ parallel queries. The update to the indices in each step can be performed with no queries, after which $p$ queries in parallel suffice to update their corresponding values. Hence the total update, denoted by $U$, can be done using a single parallel query.
  We will call a subset marked if it contains at least one pair of indices that have the same value. As we keep track of the values corresponding to indices, checking whether a subset is marked, denoted by $C$, requires no queries.

The number of parallel queries required then is 
  \[
  S+\frac{1}{\sqrt{\eps}} \trm{C+\frac{1}{\sqrt{\delta}}U} = \bigO(\frac{z}{p} + \frac{k}{z} \frac{\sqrt{z}}{\sqrt{p}} \cdot 1) = \bigO(\frac{z}{p}+\frac{k}{\sqrt{zp}}),
  \]
  which is minimal if $z=k^{2/3}p^{1/3}$, resulting in a quantum parallel-query complexity of $\bigO(\ceil{\frac{k^{2/3}}{p^{2/3}}})$.
  
  Finally we check that the choice of parameters satisfies the two mentioned requirements: $p< z$ and $z\leq k/2$. As $p < \frac{k}{8}$, we indeed have $p< (k/8)^{2/3}p^{1/3}<z$. And by the same reasoning, we have $z=k^{2/3}p^{1/3}<k^{2/3}(k/8)^{1/3}=k/2$.
\end{proof}
 Finally, we consider the problem of amplitude estimation. Recently Giurgica-Tiron et al.~\cite{giurgicatiron:ampest} gave two quantum algorithms for parallel-query amplitude estimation that achieve the optimal trade-off between number of queries and depth.  Below we give a much simplified proof for the more general setting of estimating an expectation value (instead of just a probability). Our result matches those of Giurgica-Tiron et al.\ up to logarithmic factors for the case of probability estimation.
 \begin{lemma}\label{lm:parallel-expectation}
     Let $U_X$ be a unitary matrix that, starting from $\ket{0}$, creates a quantum state such that a measurement would sample from a random variable $X$ with expectation value $\mu$ and variance $\sigma^2$. There exists a $\left(\bigO(\ceil{\frac{\sigma}{\sqrt{p}\eps} \cdot \log^{3/2}(\frac{\sigma}{\sqrt{p}\eps}) \log\log(\frac{\sigma}{\sqrt{p}\eps})}),p\right)$-parallel-query quantum algorithm that uses $U_X$ and its conjugate transpose $U_X^{\dagger}$ as input oracles and returns an $\eps$-additive-estimate of $\mathbb{E}[X]$ with success probability at least $2/3$.  
 \end{lemma}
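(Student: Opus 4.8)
The plan is to split the statement into two pieces and then glue them. The first piece is a parallel-query amplitude estimation for a bare probability: given oracles $U_X,U_X^{\dagger}$ and a fixed ``good'' subspace that a measurement of $U_X\ket{0}$ hits with probability $a$, one wants to return an additive $\delta$-estimate of $a$ using $\bigO(\ceil{\frac{1}{\sqrt p\,\delta}})$ batches of $p$ parallel queries, and (after a doubling search over the final Grover power) a multiplicative $(1\pm\gamma)$-estimate of $a$ using $\bigOt(\frac{1}{\sqrt p\,\gamma\sqrt a})$ batches. The second piece is the by-now standard Monte Carlo reduction (in the style of Montanaro's quantum speedup of Monte Carlo methods and its refinements) that turns $\eps$-additive estimation of the mean of a variance-$\sigma^2$ random variable into $\bigOt(1)$ calls to additive- and relative-error amplitude estimation, with the target precisions arranged so that the total \emph{sequential} query cost is $\bigOt(\sigma/\eps)$. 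Plugging the first piece into the second replaces, in each such call, the $\bigOt(1/\delta)$ queries for target precision $\delta$ by $\bigOt(\frac{1}{\sqrt p\,\delta})$ batches of $p$ parallel queries to $U_X,U_X^{\dagger}$; since the reduction uses the oracles only through these calls, the batch count drops to $\bigOt(\frac{\sigma}{\sqrt p\,\eps})$, and tracking the logarithmic factors of the known sequential analysis through this substitution produces exactly the stated $\log^{3/2}\log\log$ overhead, matching Giurgica-Tiron et al.~\cite{giurgicatiron:ampest} up to logarithms in the probability-estimation special case.

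For the first piece I would parallelize \emph{iterative} amplitude estimation, the variant that uses only powers of the Grover operator $Q$ together with classical sampling, so that no quantum Fourier transform is involved. The sequential scheme in round $t$ prepares and measures one copy of $Q^{k_t}U_X\ket{0}$, a circuit with $\bigO(k_t)$ oracle layers; we instead run $p$ copies of this circuit simultaneously, which costs $\bigO(k_t)$ batches of $p$ parallel queries to $U_X$ and $U_X^{\dagger}$. The $p$ outcomes estimate the round probability, hence the residue of $2k_t\theta$ modulo $\pi$ (writing $a=\sin^2\theta$), to statistical error $\bigO(\frac{1}{\sqrt p})$, so the power may be multiplied by $\Theta(\sqrt p)$ from one round to the next while the modulo-$\pi$ ambiguity stays resolved; after $\bigO({\log_p(1/\delta)})$ rounds the largest power is $\ceil{\frac{1}{\sqrt p\,\delta}}$, and the total number of batches, being a geometric sum of the powers, has the same order. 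Confidence is essentially free, since $p$ samples make a single round fail with probability $e^{-\Omega(p)}$ (and otherwise polylogarithmically many extra repetitions suffice). The relative-error variant simply runs the iteration until $\theta$ itself is pinned down to additive $\Theta(\gamma\sqrt a)$; because $a$ is unknown this calls for a doubling search over the stopping power, which is where the extra logarithmic factors come from.

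For the second piece one follows the standard Monte Carlo recipe, with parallel amplitude estimation as the subroutine. After the usual preprocessing (a doubling search that locates the scale of the distribution, and hence a rough estimate of $\mu$ to within additive $\bigO(\sigma)$), one may center $X$ and write it as $X^+-X^-$ with $X^+,X^-\geq 0$, each of variance $\bigO(\sigma^2)$ and mean $\bigO(\sigma)$. The mean of $X^+$ (and likewise $X^-$) is then the contribution of its bulk $X^+\cdot\onevec\!\left[X^+\leq\sigma\right]$ plus a sum of dyadic tails $X^+\cdot\onevec\!\left[2^{i}\sigma< X^+\leq 2^{i+1}\sigma\right]$, $i\geq 0$; Chebyshev bounds the $i$-th tail mean by $\bigO(\sigma\cdot 2^{-i})$ and the $i$-th tail term by $2^{i+1}\sigma$, so a \emph{relative}-error estimate of that term (after rescaling by $2^{i+1}\sigma$) costs $\bigOt(\frac{\sigma}{\sqrt p\,\eps})$ batches, only $\bigO({\log(\sigma/\eps)})$ of the terms matter, and the bulk (rescaled by $\sigma$) needs one estimate to additive precision $\eps/\sigma$ at the same cost; summing the $\bigO({\log(\sigma/\eps)})$ contributions completes the bound. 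I expect the main obstacle to be checking that the $\sqrt p$ speedup genuinely survives every step. First, relative-error amplitude estimation must parallelize to $\bigOt(\frac{1}{\sqrt p\,\gamma\sqrt a})$ batches rather than paying an amplification stage of depth $\Theta(1/\sqrt a)$ that does not parallelize; this works because in the iterative scheme amplification and estimation happen together, so the precision attained on $a$ is $\Theta(\sqrt a)$ times the precision attained on the angle $\theta$, and it is the latter that carries the $\sqrt p$. Second, once the per-term error budgets are chosen to sum to $\eps$, the batch counts must still sum to $\bigOt(\frac{\sigma}{\sqrt p\,\eps})$ rather than blowing up to $\Theta(\eps^{-2})$, which is exactly what using \emph{relative} rather than additive error on the large-but-rare tail pieces buys.
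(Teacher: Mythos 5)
Your outline is workable, but it takes a genuinely different --- and far more laborious --- route than the paper. The paper's entire proof is a two-line black-box reduction: define $Y$ to be the average of $p$ i.i.d.\ copies of $X$, note that $U_Y$ is built from $p$ applications of $U_X$ acting on disjoint registers (hence one batch of $p$ parallel queries per use of $U_Y$ or $U_Y^\dagger$), observe that $Y$ has variance $\sigma^2/p$, and invoke the sequential mean-estimation result (Theorem~5 of Montanaro's Monte Carlo paper, which is exactly the $p=1$ case of the lemma) with $\sigma'=\sigma/\sqrt{p}$. This converts the sequential query count $\bigO\bigl(\tfrac{\sigma'}{\eps}\log^{3/2}(\tfrac{\sigma'}{\eps})\log\log(\tfrac{\sigma'}{\eps})\bigr)$ directly into the stated batch count, with all logarithmic factors inherited verbatim and no re-analysis of amplitude estimation whatsoever. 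You instead propose to open up both layers: parallelize iterative amplitude estimation from scratch (growing the Grover power by $\Theta(\sqrt{p})$ per round) and then re-run Montanaro's dyadic tail decomposition on top of it. That is essentially the Giurgica-Tiron et al.\ programme, and it can be made to work, but it forces you to verify precisely the points you flag as obstacles --- that relative-error estimation parallelizes without a non-parallelizable amplification stage, that the per-round angle inversion from $\sin^2((2k_t+1)\theta)$ retains $\bigO(1/\sqrt{p})$ angular precision even near stationary points of $\sin^2$ (which needs the standard fix of an auxiliary rotated measurement or adaptive choice of $k_t$, and which you do not address), and that the log factors compose correctly across $\bigO(\log(\sigma/\eps))$ dyadic terms. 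What your approach buys is self-containedness and insight into the depth-versus-queries trade-off inside amplitude estimation; what the paper's averaging trick buys is that every one of these issues disappears, since the variance reduction happens classically inside the random variable and the quantum algorithm is used entirely as a black box. If you only need the stated lemma, the averaging reduction is the argument to give.
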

 \begin{proof}
    The lemma for the $p=1$ case is Theorem~5 in~\cite{montanaro:montecarlo}.

    Now consider the random variable $Y$ corresponding to the average of $p$ samples from $X$. Clearly a unitary $U_Y$ can be constructed using $p$ applications of $U$. Furthermore, the variance of $Y$ is $\frac{\sigma^2}{n}$. Applying the lemma for the $p=1$ case with $Y$ and $\sigma' = \sigma/\sqrt{p}$ therefore gives the result.
 \end{proof}

\section{Making Parallel Queries in the CONGEST Model}\label{sc:mainthm}
In this section, we present our main contribution: we show how to change a register of $q$ qubits located at one node, into a state that is distributed through the system and vice versa. For the precise statement, see \autoref{lm:The-Van-Apeldoorn-De-Vos-Lemma}. We then continue to show how this can be used to evaluate functions were the input is distributed over the nodes, see \autoref{thm:mainthm}. This is done by selecting a leader who runs the central quantum algorithm, where each query has to be made to the network.

\begin{lemma}\label{lm:The-Van-Apeldoorn-De-Vos-Lemma}
    Let $Q = 2^q$. In the Quantum CONGEST model, assume there is a designated leader. If the leader holds a state $\sum_i^Q \alpha_i\ket{i}$, then there is a $\bigO\left(D + \frac{q}{\log(n)}\right)$ round algorithm that transforms the state into $\sum_i^Q \alpha_i\ket{i}^{\otimes n}$, where each node holds one of the registers.
    
    The reverse is also possible, if the nodes together hold $\sum_i^Q \alpha_i\ket{i}^{\otimes n}$, then there is a  $\bigO\left(D + \frac{q}{\log(n)}\right)$ round algorithm that transforms the state into $\sum_i^Q \alpha_i\ket{i}$, situated at the leader.
\end{lemma}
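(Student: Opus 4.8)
The plan is to realise the target state as a \emph{pipelined} fan-out of the leader's register along a BFS tree, where ``fan-out'' means the unitary computational-basis copy map $\sum_i\alpha_i\ket{i}\mapsto\sum_i\alpha_i\ket{i}\ket{i}$ implemented by a layer of CNOTs. This is not cloning, so $\sum_i^Q\alpha_i\ket{i}^{\otimes n}$ is a legitimate target; moreover the whole procedure will be a unitary on the joint system together with ancillas, which is exactly what makes the reverse direction cost the same. First I would construct a BFS tree rooted at the leader, which takes $\bigO(D)$ (classical) CONGEST rounds by standard methods; since the graph has diameter $D$ this tree has depth at most $D$, and each non-root node learns its parent and its set of children.

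For the fan-out, cut the leader's $q$-qubit register into $\ceil{q/c}$ consecutive chunks of $c=\bigTheta(\log(n))$ qubits, so that one chunk fits in a single message. The leader sends its chunks down the tree, one per round. When a node receives a chunk from its parent, it keeps those qubits as the corresponding chunk of its own copy of the register, prepares a fresh block of $c$ qubits in state $\ket{0}^{\otimes c}$ for each of its children, applies a transversal CNOT from the received chunk onto each fresh block, and forwards the fresh blocks to the respective children in the next round. Since a transversal CNOT acts qubit-by-qubit, it commutes with the chunk decomposition, so performing the copy chunk-by-chunk gives the same result as copying the whole register at once; and at each node, copying one control register onto the blocks destined for all its children produces $\ket{i}$ simultaneously on the node's own register and on every child-bound block. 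Composing these along the tree, the joint state becomes $\sum_i^Q\alpha_i\ket{i}^{\otimes n}$, with the leader's original register playing the role of one of the $n$ copies; the $\bigO(q)$ ancilla qubits per child are free because local computation and memory are unbounded. For the round count, chunk $j$ leaves the leader in round $j$ and advances one hop per round --- a node can receive chunk $j{+}1$ on its parent edge in the same round that it sends chunk $j$ on its child edges --- so it reaches a node at depth $d$ by round $j+d-1$; with $j\le\ceil{q/c}$ and $d\le D$ the procedure finishes in $\bigO\!\left(D+q/\log(n)\right)$ rounds (and in $\bigO(D)$ rounds when $q=\bigO(\log(n))$).

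For the reverse direction, the forward procedure together with the ancilla preparations is a unitary $U$ taking $\sum_i\alpha_i\ket{i}\otimes\ket{0\cdots0}$ (the register held by the leader, ancillas everywhere else) to $\sum_i^Q\alpha_i\ket{i}^{\otimes n}$; running $U^{\dagger}$ --- reverse the order of rounds, replace each local gate by its inverse (CNOT is self-inverse), and reverse the direction of every qubit transmission --- maps $\sum_i^Q\alpha_i\ket{i}^{\otimes n}$ back to $\sum_i\alpha_i\ket{i}$ at the leader with all ancillas returned to $\ket{0}$, in the same round count. I expect the main obstacle to be making the pipelining argument rigorous: one must verify that a node may forward chunks on the fly, without waiting for the whole register to arrive (which rests precisely on the transversal CNOT respecting the chunk structure), so that per-hop latency and number of chunks add rather than multiply, and one must check that the time-reversed circuit is genuinely implementable round-by-round in Quantum CONGEST --- in particular that reversing a transmission is legitimate, i.e.\ a qubit sent from $u$ to $v$ can later be sent back from $v$ to $u$.
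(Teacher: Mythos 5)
Your proposal is correct and follows essentially the same route as the paper's proof: build a BFS tree from the leader, fan out the register via CNOTs onto fresh child registers layer by layer, pipeline the $\bigO(\log(n))$-qubit chunks so that depth and number of chunks add rather than multiply, and obtain the reverse direction by running the (unitary) procedure backwards. Your version merely spells out the chunk/transversal-CNOT bookkeeping and the reversibility of transmissions more explicitly than the paper does.
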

\begin{proof}
    First of all, we construct a BFS tree from the leader in $\bigO(D)$ rounds\footnote{This is done by the folklore algorithm, where, starting with the leader, each node declares itself scanned in round $i$ if one of their neighbors declared itself scanned in round $i-1$. In case multiple neighbors do so, pick any to be the parent on the tree.}. This tree will be used for the communication, where we will send information down to all the nodes from the leader. We perform the analysis on a single basis state $\ket{i}$, the result then follows by linearity. 
    
    At the start of the algorithm the leader creates a new register for each of its children. The leader then applies CNOT gates from its starting state $\ket{i}$ to all of the registers, creating the state $\ket{i}^{\otimes k+1}$, where $k$ is the number of children. It can then send each of its children their register using $\ceil{q/\log(n)}$ rounds. This is then repeated for each of the at most $D$ layers of the tree. 
    
    Naively this would result in an $\bigO(D\ceil{q/\log(n)})$ round complexity. However, note that a node does not have to wait until it received the full register $\ket{i}$ before it can start sending to its children. Each $\log(n)$ message can be passed on in the next round. This leads to a $\bigO(D+q/\log(n))$ round complexity.
    For the reverse result, we run the same algorithm in reverse. 
\end{proof}

Note that leader election can be done in $\bigO(D)$ in the CONGEST model, so if no leader is provided, we can for example take the node with the largest identifier. 

We now show that, given a parallel-query quantum algorithm, we can transform this into a Quantum CONGEST algorithm, where the queries have to be made with the help of the network. We show how to make these queries using the previous lemma, and how the parallel-query upper bound of the quantum algorithm leads to a upper bound on the number of rounds in the CONGEST model. 

This method is somewhat similar to the more limited method used by Le Gall, and Magniez~\cite{le2018sublinear}. They show that the queries in the non-parallel version of Grover's search algorithm and maximum finding can be implemented in the CONGEST model. Using the non-parallel version of query algorithms can be sub-optimal as for high diameter graphs most nodes will be sitting idle most of the time as the query makes its way through the network. By performing multiple queries at once this can be circumvented. 

\begin{theorem}\label{thm:mainthm}
  Let $f \colon A^{k\times n} \rightarrow R$ be defined as $F\left(\bigoplus_{v\in V} x^{(v)}\right)$ for some $F:A^k\rightarrow R$ and some operation $\oplus \colon A\times A \rightarrow A$ that is performed element wise. Let $q = \ceil{\log(|A|)}$. If $(A,\oplus)$ forms a commutative semigroup, and $F$ has parallel-query quantum complexity $(b,p)$, then $f$ can be evaluated in the Quantum CONGEST model in
  \[
    \bigO\trm{D + b\cdot \trm{(D+p) \ceil{\frac{q}{\log(n)}} + p\ceil{\frac{\log(k)}{\log(n)}}}}  
  \]
  rounds.
\end{theorem}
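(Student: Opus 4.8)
\noindent The plan is to have the leader simulate the $(b,p)$-parallel-query algorithm for $F$ on the distributed string $y:=\bigoplus_{v\in V}x^{(v)}\in A^k$, and to implement each of the $b$ batches of $p$ simultaneous oracle calls with the help of the network, reusing \autoref{lm:The-Van-Apeldoorn-De-Vos-Lemma}. First I would elect a leader and build a BFS tree rooted at it, in $\bigO(D)$ rounds. Since the model allows unlimited free internal computation, the leader can run all the query-free parts of the algorithm on its own at no round cost, and after the $b$ batches it measures its output register to read off $F(y)=f\trm{\bigoplus_{v}x^{(v)}}$; the only communication happens when the leader must apply the batched oracle $O^{\otimes p}$ to its state.

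To implement one batch, write the leader's state as $\sum \alpha_{\vec{i},\vec{z},w}\ket{\vec{i}}\ket{\vec{z}}\ket{w}$, where $\vec{i}=(i_1,\dots,i_p)\in[k]^p$ is the tuple of query indices (a register of $p\ceil{\log k}$ qubits), $\vec{z}$ holds the $p$ answer registers of $q$ qubits each, and $w$ is the workspace. I would assemble a single unitary $W$, implemented by the network, from three steps: (i) apply the forward direction of \autoref{lm:The-Van-Apeldoorn-De-Vos-Lemma} to the index register only --- a controlled copy down the BFS tree, which is unitary even though that register is entangled with $\vec{z}$ and $w$ --- giving every node a copy of $\vec{i}$ in $\bigO(D+p\ceil{\log k/\log n})$ rounds; (ii) each node $v$ coherently looks up $x^{(v)}_{i_1},\dots,x^{(v)}_{i_p}$ into a fresh register, a free local unitary; (iii) accumulate the partial $\oplus$-sums up the BFS tree, each node combining into a fresh register the length-$p$ value vectors received from its children together with its own lookup vector, and forwarding the result to its parent, so that the leader ends up holding $\vec{y}=(y_{i_1},\dots,y_{i_p})$ in an ancilla. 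Here $(A,\oplus)$ being a commutative semigroup is exactly what makes step (iii) well-defined, since $y_{i_j}=\bigoplus_{v}x^{(v)}_{i_j}$ is then independent of the aggregation order; and as $\oplus$ need not be invertible, each node keeps all received vectors as garbage rather than overwriting them. The leader then applies the bitwise-XOR oracle $\ket{\vec{z}}\mapsto\ket{\vec{z}\oplus\vec{y}}$, a free local unitary, and finally runs $W^{-1}$, which erases all ancillae --- every uncomputed register being a function of data still present --- and restores the index register, leaving exactly $O^{\otimes p}$ applied to the leader's state. Running $W^{-1}$ reverses every communication step, so it costs the same as $W$.

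It remains to count rounds. Step (i) and its inverse cost $\bigO(D+p\ceil{\log k/\log n})$ by \autoref{lm:The-Van-Apeldoorn-De-Vos-Lemma}. For step (iii) the key point is that, unlike the copy-down of step (i) where each $\log(n)$-bit chunk can be forwarded on the next round, a node needs a \emph{complete} element of $A$ from each child before it can apply $\oplus$, so pipelining is possible only at the granularity of $A$-elements ($q$ bits each): the first of the $p$ coordinates reaches the root after $\bigO(D\ceil{q/\log n})$ rounds and the remaining $p-1$ follow one every $\ceil{q/\log n}$ rounds, for a total of $\bigO((D+p)\ceil{q/\log n})$ rounds, matched by its inverse. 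Hence one batch costs $\bigO((D+p)\ceil{q/\log n}+p\ceil{\log k/\log n})$ rounds; multiplying by $b$ and adding the $\bigO(D)$ for set-up yields the claimed bound, and the $2/3$ success probability is inherited from the query algorithm. I expect the main obstacle to be precisely this non-invertibility of $\oplus$: it forces the compute-garbage-then-uncompute structure and caps the pipelining granularity of the aggregation at $q$-bit blocks (giving the $(D+p)\ceil{q/\log n}$ term rather than the $D+pq/\log n$ one might first hope for), and one must take care that steps (i)--(iii) are all carried out coherently so that $W$ is a genuine unitary admitting the inverse $W^{-1}$.
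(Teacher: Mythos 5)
Your proposal is correct and follows essentially the same route as the paper's proof: elect a leader, build a BFS tree, distribute each batch of $p$ indices via \autoref{lm:The-Van-Apeldoorn-De-Vos-Lemma}, aggregate the $\oplus$-sums up the tree with pipelining only at the granularity of whole $A$-elements (giving the $(D+p)\lceil q/\log n\rceil$ term), and uncompute by reversing the communication. Your treatment is somewhat more explicit than the paper's about coherence and garbage registers (the paper uncomputes children's values on the way up rather than via a global $W^{-1}$ at the end), but this is only a presentational difference.
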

\begin{proof}
    The idea is similar to \autoref{lm:The-Van-Apeldoorn-De-Vos-Lemma}, start with leader election and BFS, in $\bigO(D)$ rounds. After this, the leader runs the query algorithm and for each batch of $p$ parallel queries uses the result from \autoref{lm:The-Van-Apeldoorn-De-Vos-Lemma} on $\otimes_{i=1}^p \ket{j_i}$ to distribute the indices $j_1, \dots, j_p \in[k]$ through the network, costing $\bigO\trm{D+ p\ceil{\frac{\log(k)}{\log(n)}}}$ for a batch. 
    
    Now for all $i\in [p]$, all leaf nodes $w$ send the query results $x^{(w)}_{j_i}$ to their parent, who computes $\bigoplus_{v} x^{(v)}_{j_i}$, with $v$ ranging over itself and its children. It then sends the $x^{(w)}_{j_i}$ back to the children, who uncompute it. This continues through the tree. Note that here we can not fully stream the results as before, as a node needs the full values of its children before it can compute the $\oplus$ operation. Sending one result up one level requires $\bigO(\ceil{q/\log(n)})$ rounds. As soon as the leaves are done with the first query value they can  start with the second query value, and so on. In total this will require $D+p$ sets of $\bigO(\ceil{q/\log(n)})$ rounds to compute all queries at the leader. 
    
    After this, the simulated query ends with the reverse of the distribution of the indices $j_1,\dots,j_p$. The stated complexity then follows from the complexity of the parallel query algorithm.
\end{proof}
This theorem can directly be applied if the nodes hold their data in memory. However, if we want to apply the theorem to graph theoretic problems (as in \autoref{sc:graph-problems}), then this might not be the case. In that setting the nodes will often have to perform a short CONGEST algorithm to compute the values. For example, for the computation of the diameter we will query the eccentricity of a node. To compute this eccentricity we first computed BFS from the node. Computing all these values before applying \autoref{thm:mainthm} and saving them in memory takes too many rounds. However, the following corollary allows us to compute value on-the-fly. 

\begin{corollary}\label{cor:thm-with-query-time}
    Suppose we are in the setting of \autoref{thm:mainthm}, but the values $x^{(v)}_j$, are not known in advance. Rather, we can compute the $p$ values of a batch in $\alpha(p)$ rounds. Then we can evaluate $f$ in the Quantum CONGEST model in 
  \[
    \bigO\trm{D + b\cdot \trm{(D+p) \ceil{\frac{q}{\log(n)}} + p\ceil{\frac{\log(k)}{\log(n)}}+\alpha(p)}}  
  \]
  rounds. 
\end{corollary}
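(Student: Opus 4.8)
The plan is to reduce Corollary~\ref{cor:thm-with-query-time} to Theorem~\ref{thm:mainthm} by inserting the on-the-fly computation step exactly where the query values are needed, and to observe that the only change in the round count is an additive $\alpha(p)$ term \emph{inside} the per-batch cost. Concretely, I would rerun the proof of Theorem~\ref{thm:mainthm} verbatim up to the point where, for a batch of $p$ parallel queries with indices $j_1,\dots,j_p$ distributed through the BFS tree, each node $v$ needs its values $x^{(v)}_{j_1},\dots,x^{(v)}_{j_p}$. In the original proof these are read from memory for free; here, instead, every node simultaneously runs the promised $\alpha(p)$-round CONGEST subroutine to produce exactly those $p$ values. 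Since the subroutine is run by all nodes in parallel on the same globally known batch of indices, this costs $\alpha(p)$ rounds total, after which we are in precisely the state the Theorem~\ref{thm:mainthm} proof assumes, and the rest of the batch (the up-the-tree $\oplus$-aggregation, then the reverse distribution of the $j_i$, then uncomputation) proceeds unchanged, at cost $\bigO\trm{(D+p)\ceil{q/\log(n)} + p\ceil{\log(k)/\log(n)}}$. Multiplying by the $b$ batches and adding the initial $\bigO(D)$ for leader election and BFS gives the stated bound.

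The key steps, in order: (i) set up leader, BFS tree, and let the leader begin simulating the $(b,p)$-parallel-query algorithm for $F$; (ii) for each of the $b$ batches, use \autoref{lm:The-Van-Apeldoorn-De-Vos-Lemma} on $\bigotimes_{i=1}^p\ket{j_i}$ to broadcast the $p$ query indices through the tree; (iii) \emph{new step}: all nodes invoke the $\alpha(p)$-round routine to compute their $p$ local values $x^{(v)}_{j_i}$ — this must be done \emph{coherently}, i.e.\ as a reversible/unitary subroutine controlled on the index registers, writing the values into fresh ancillae; (iv) aggregate $\bigoplus_v x^{(v)}_{j_i}$ up the tree and down again to realize the oracle call, exactly as in Theorem~\ref{thm:mainthm}; (v) uncompute: run step (iii) in reverse to clear the value ancillae, then undo step (ii); (vi) repeat for all $b$ batches and have the leader output $F$'s answer.

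The main obstacle — and the point that deserves care rather than being waved through — is step (iii)'s coherence requirement. The query in a quantum query algorithm must act as a unitary on a superposition of index registers: the node cannot simply "compute and store" a classical value, it must implement the map $\ket{j}\ket{0}\mapsto\ket{j}\ket{x^{(v)}_j}$ reversibly, uniformly over $j$, and later uninvoke it to return the workspace to $\ket{0}$ so it does not remain entangled with the rest of the computation. This forces the assumption that the $\alpha(p)$-round CONGEST routine is itself a reversible quantum procedure (classical CONGEST steps are fine, since any classical circuit can be made reversible with a $\log(n)$-factor-free overhead in rounds, but I would state this explicitly), and that "computing the $p$ values of a batch" is understood to mean on whatever batch of indices is currently in the registers, including in superposition. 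Once that is granted, there is no interaction between the $\alpha(p)$ overhead and the rest of the round accounting — it is a clean additive term inside the per-batch bracket — so the rest of the argument is bookkeeping identical to the proof of Theorem~\ref{thm:mainthm}.
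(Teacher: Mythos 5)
Your proposal is correct and follows essentially the same route as the paper, whose proof is a single sentence: run the algorithm of \autoref{thm:mainthm} unchanged, but before aggregating each batch have the nodes spend $\alpha(p)$ rounds computing the $p$ needed values, yielding a clean additive $\alpha(p)$ inside the per-batch cost. Your extra discussion of the coherence requirement in step (iii) --- that the $\alpha(p)$-round routine must implement $\ket{j}\ket{0}\mapsto\ket{j}\ket{x^{(v)}_j}$ reversibly over superposed indices and later be uncomputed --- is a genuine subtlety the paper leaves implicit, and stating it explicitly only strengthens the argument.
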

\begin{proof}
    This follows directly from \autoref{thm:mainthm}, we apply the same algorithm, but for each batch of queries, we first compute the $p$ values we want to query in $\alpha(p)$ rounds.
\end{proof}

\section{Applications to Distributed Data Problems}\label{sc:data_problems}
In this section we give a few applications of \autoref{thm:mainthm}. In particular, we consider the setting were the nodes already hold the data $x^{(v)}$ in memory. This setting is less common for the CONGEST model but suits our framework very well. 
\subsection{Meeting Scheduling}
Informally, the meeting scheduling problem is as follows: suppose there are $n$ participants and $k$ time slots. Given that each participant knows which slots they are available, determine the time slot with the maximum number of available participants. 

\begin{lemma}[Meeting Scheduling]
Given $x^{(v)}\in \{0,1\}^k$, for each $v\in V$. In the Quantum CONGEST model we can compute $\argmax_{i\in[k]} \sum_{v\in V} x^{(v)}_i$ with success probability at least $2/3$ in 
\[
    \bigO((\sqrt{kD}+D)\ceil{\frac{\log(k)}{\log(n)}})
\]
rounds of communication.
\end{lemma}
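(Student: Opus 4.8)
The plan is to realize this as a direct instance of \autoref{thm:mainthm}. Take the aggregation alphabet to be $A=\{0,1,\dots,n\}$ equipped with addition truncated at $n$; this is a commutative semigroup, and since at most $n$ nodes can contribute a $1$ to any coordinate, the truncation never actually triggers, so $\bigoplus_{v\in V}x^{(v)}$ is exactly the vector $\bigl(\sum_{v\in V}x^{(v)}_i\bigr)_{i\in[k]}\in A^k$ (viewing each $x^{(v)}\in\{0,1\}^k$ inside $A^k$ via $\{0,1\}\subseteq A$). The function we want is then $F\colon A^k\to[k]$ returning a coordinate of maximum value. Note $q=\ceil{\log(|A|)}=\ceil{\log(n+1)}=\bigO(\log n)$, so the $\ceil{q/\log(n)}$ factor appearing in \autoref{thm:mainthm} is $\bigO(1)$.

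Next I would invoke the parallel-query complexity of $F$. Computing a maximum of a length-$k$ array is exactly parallel maximum finding, so by \autoref{lm:batched-minfinding} there is a $\bigl(\bigO(\ceil{\sqrt{k/p}}),p\bigr)$-parallel-query quantum algorithm for $F$, the comparisons of elements of $A$ being free internal computation at the leader. Plugging $b=\bigO(\ceil{\sqrt{k/p}})$ and $q=\bigO(\log n)$ into \autoref{thm:mainthm}, and using $\ceil{\log(k)/\log(n)}\ge 1$ to absorb the bare $(D+p)$ contribution, gives a round complexity of
\[
\bigO\!\left(D+\ceil{\sqrt{k/p}}\cdot\Bigl(D+p\ceil{\tfrac{\log(k)}{\log(n)}}\Bigr)\right).
\]
It remains to choose the batch size. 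Setting $p=D$ and checking the two cases $k\ge D$ (where $\ceil{\sqrt{k/D}}=\bigO(\sqrt{k/D})$) and $k<D$ (where $\ceil{\sqrt{k/D}}=1$) collapses $\ceil{\sqrt{k/D}}\cdot D$ to $\bigO(\sqrt{kD}+D)$, so the whole bound becomes $\bigO\bigl((\sqrt{kD}+D)\ceil{\log(k)/\log(n)}\bigr)$, as claimed. The success probability $2/3$ is inherited directly from \autoref{lm:batched-minfinding}.

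This is essentially a plug-and-play argument, so I do not anticipate a genuine obstacle. The two points that need care are: (i) choosing the aggregation semigroup so that $|A|=\bigO(n)$, which is what makes the $\ceil{q/\log(n)}$ overhead vanish and keeps the bound free of spurious $\log$ factors; and (ii) the one-line optimization of the batch size, where $p=\Theta(D)$ balances the per-batch ``diameter'' cost against the per-batch ``query'' cost arising over the $b=\bigO(\ceil{\sqrt{k/p}})$ batches.
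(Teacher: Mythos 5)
Your proposal is correct and follows essentially the same route as the paper: invoke parallel maximum finding (\autoref{lm:batched-minfinding}) with batch size $p=D$, so $b=\bigO(\ceil{\sqrt{k/D}})$, and plug into \autoref{thm:mainthm} with $A$ of size $\bigO(n)$ under addition so that $\ceil{q/\log(n)}=\bigO(1)$. Your remark about truncating addition to make $(A,\oplus)$ a genuine commutative semigroup is a minor extra care the paper glosses over (it just writes $A=[n]$, $\oplus=+$), but it changes nothing substantive.
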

\begin{proof}
    We use the parallel-query maximum finding algorithm of \autoref{lm:batched-minfinding} with $p=D$, giving us $b=\bigO(\ceil{\sqrt{k/D}})$. Now we apply \autoref{thm:mainthm} for $A = [n]$ and $\oplus = +$. This gives an 
    \begin{align*}
        \bigO\left( D+b\left((D+p)\ceil{\frac{q}{\log(n)}}+p\ceil{\frac{\log(k)}{\log(n)}}\right)\right) &= \bigO\left( D+\ceil{\sqrt{k/D}}\left(2D+D\ceil{\frac{\log(k)}{\log(n)}}\right)\right) \\
        &= \bigO\left((\sqrt{kD}+D)\ceil{\frac{\log(k)}{\log(n)}}\right)
    \end{align*}
    round algorithm. 
\end{proof}
Note that this can be generalized to other domains $A$ and non-zero-one inputs, at the cost of an extra $q = \log(|A|)$ factor.

We can show a separation between the quantum and classical CONGEST models by proving a lower bound in the classical case. 
\begin{lemma}\label{lm:LB-meeting}
Solving the meeting scheduling problem with success probability $\geq 2/3$ requires $\bigOmega(k/\log(n)+D)$ rounds in the classical CONGEST model and $\bigOmega(\sqrt[3]{kD^2}+\sqrt{k})$ rounds in the quantum CONGEST model.
\end{lemma}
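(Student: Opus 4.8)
The plan is to prove both bounds by reduction from the two-party communication complexity of set intersection (the disjointness problem): Alice and Bob hold $x^{(A)},x^{(B)}\in\zo^k$ and must decide whether some coordinate $i$ has $x^{(A)}_i=x^{(B)}_i=1$. This needs $\bigOmega(k)$ bits of communication classically~\cite{KS87,Razborov90} and $\bigOmega(\sqrt{k})$ qubits quantumly (up to logarithmic factors). The reduction gadget is the same in both settings: embed $x^{(A)}$ as the private calendar of a node $a$ and $x^{(B)}$ as the private calendar of a node $b$, with every other node unavailable at every slot, so that $\sum_{v\in V}x^{(v)}_i=x^{(A)}_i+x^{(B)}_i$ and the maximum availability over the $k$ slots is $2$ exactly when the two sets intersect. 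A meeting-scheduling algorithm outputs an $\argmax$ index $i^\star$; after learning $i^\star$, Alice and Bob exchange $\bigO(\log k)$ additional bits to read off $x^{(A)}_{i^\star}$ and $x^{(B)}_{i^\star}$ and so decide intersection. Hence a $T$-round meeting-scheduling algorithm on a graph whose bottleneck --- the edge set separating $a$'s side from $b$'s side --- has size $B$ yields a set-intersection protocol using $\bigO(TB\log n)$ bits (or qubits, in the quantum model).

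\textbf{The classical bound and the $\sqrt{k}$ term.} Realize the gadget on a graph of diameter $\bigTheta(D)$ in which a single edge separates $a$ from $b$: join $a$ and $b$ by a path of length $\bigTheta(D)$ and add dummy nodes as needed. Here $B=1$, so a $T$-round classical algorithm gives a set-intersection protocol with $\bigO(T\log n)$ bits, forcing $T=\bigOmega(k/\log n)$; together with the trivial information-theoretic bound $T=\bigOmega(D)$ (the two calendars lie at distance $\bigTheta(D)$ and the output depends on both), this is the claimed $\bigOmega(k/\log n + D)$. The same graph together with the known $\bigOmega(\sqrt{k})$ quantum lower bound for set intersection gives $T=\bigOmegat(\sqrt{k})$ in the quantum model, which is the second term of the quantum bound.

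\textbf{The $\sqrt[3]{kD^2}$ term.} To get a quantum bound exceeding both $D$ and $\sqrt{k}$ (which happens once $D\gtrsim k^{1/4}$) we must exploit the \emph{latency} of a long bottleneck, in the spirit of the Peleg--Rubinovich and Das Sarma et al.\ constructions: encode the set-intersection instance on a graph of diameter $\bigTheta(D)$ whose cut is only crossed after $\bigTheta(D)$ rounds of message propagation, so that in $T$ rounds a quantum algorithm can realize at most $r=\bigO(T/D)$ ``rounds of interaction'' across the cut while still communicating only $\bigO(T\log n)$ qubits in total. The induced object is then a quantum set-intersection protocol with $\bigO(T/D)$ rounds and $\bigO(T\log n)$ qubits. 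Invoking a round--communication tradeoff for quantum set intersection --- an $r$-round quantum protocol requires $\bigOmegat(k/r^2)$ qubits --- gives $T\log n=\bigOmegat(kD^2/T^2)$, hence $T^3=\bigOmegat(kD^2)$ and $T=\bigOmegat(\sqrt[3]{kD^2})$. Combining with the previous paragraph --- and noting that $\sqrt[3]{kD^2}\geq D$ once $k\geq D$, so the diameter term is absorbed --- yields $\bigOmega(\sqrt[3]{kD^2}+\sqrt{k})$.

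\textbf{Where the work is.} The classical latency accounting is routine because, whenever the partition of the simulated graph between the two players is redrawn, the (bounded-description) state of a boundary node can simply be handed over; in the quantum model this is precisely the delicate point, since that register may be large and entangled with the rest of its half and hence cannot be copied or cheaply resent. The fix I would pursue is a genuine parallel-highway construction --- many length-$\bigTheta(D)$ paths forming the cut, with slot $i$ of the instance placed on path $i$ --- so that redrawing the partition costs only $\bigO(\log n)$ qubits per path per interaction round, matching the $\bigO(T\log n)$-qubit budget, together with a careful quantum analogue of the Das Sarma et al.\ simulation lemma. The other ingredient to pin down is the precise quantum round--communication tradeoff yielding the $k/r^2$ dependence. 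Checking that the maximum-availability gadget remains faithful on the parallel-highway graph, i.e.\ that the maximum over slots still certifies intersection, is where most of the care will go; an essentially identical argument then gives the lower bound for distributed element distinctness stated in the introduction.
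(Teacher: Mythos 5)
Your reduction gadget and the classical half of the argument are exactly what the paper does: embed a two-party disjointness instance in the calendars of two nodes at distance $D$ with all other calendars zero, let one player simulate $v_A$ and the other simulate the rest of the network, and charge the $\bigO(T\log n)$ bits crossing the single bottleneck edge against the $\bigOmega(k)$ classical communication lower bound for disjointness, adding the trivial $\bigOmega(D)$ term. That part is complete and correct.

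The gap is in the quantum half. The paper does not reprove anything there: it applies the same reduction and then invokes, as a black box, the known fact that set disjointness on a line of length $D$ requires $\bigOmega(\sqrt[3]{kD^2}+\sqrt{k})$ rounds in the quantum CONGEST model~\cite{magniez2020quantum}. What you sketch is essentially a proof of that cited theorem, and the two ingredients you yourself flag as ``to be pinned down'' --- a quantum analogue of the Das Sarma et al.\ simulation lemma that handles the fact that a boundary node's register may be large and entangled with its half (so the partition cannot be cheaply redrawn), and a round--communication tradeoff of the form $\bigOmegat(k/r^2)$ for $r$-round quantum disjointness protocols --- are precisely the technical content of that paper and are far from routine; your ``parallel-highway'' fix is a plan, not an argument, and in particular you have not verified that the meeting-scheduling gadget remains faithful on such a graph while keeping the diameter $\bigTheta(D)$. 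As written the proposal therefore does not establish the $\bigOmega(\sqrt[3]{kD^2})$ term. The repair is simple, though: keep your reduction on the length-$D$ path exactly as in the classical case and replace the entire third paragraph by a citation of the quantum CONGEST lower bound for set disjointness on a line, which is what the paper does.
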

\begin{proof}
    We will reduce two-party disjointness to the meeting scheduling problem. For the two party disjointness problem, we have parties $A$ and $B$, with inputs $x^{(A)} \in \zo^k$ and $x^{(B)} \in \zo^k$ respectively. The question is if $\max_j(x^{(v_A)}_j+x^{(v_B)}_j)=2$. This takes takes $\bigOmega(k)$ rounds of communication with messages of size $\bigO(1)$ \cite{KS87,Razborov90}, even if we just require to be correct with constant probability. Now suppose there is a $\littleO(k/\log(n))$-round CONGEST algorithm for the meeting scheduling problem, then this gives rise to a $\littleO(k)$-round algorithm for the two party disjointness problem in the following manner. Consider the graph consisting of two nodes $v_A$ and $v_B$ distance $D$ apart. Suppose we have $x^{(v)}=0^k$ for $v\neq v_A,v_B$ and $x^{(v_A)}=x^{(A)}$ and $x^{(v_B)}=x^{(B)}$. Now the meeting scheduling algorithm will tell us $\argmax_{i\in[k]} \sum_{v\in V} x^{(v)}_i$, in an additional round of communication, this gives us whether $\max_j(x^{(v_A)}_j+x^{(v_B)}_j)=2$, hence solving the instance of two-party disjointness. This CONGEST algorithm becomes a two-party algorithm, by letting $A$ simulate $v_A$ and letting $B$ simulate all other nodes. The communication between $A$ and $B$ is now the communication between $v_A$ and its neighbor, which is $\littleO(k/\log(n))$ rounds of communication with messages of size $\bigO(\log(n))$. By reducing message size to $\bigO(1)$, we get a $\littleO(k)$-round algorithm for two-party disjointness, a contradiction. 
    
    Similarly, we get a lower bound of $\bigOmega(D)$, since $v_A$ and $v_B$ as defined above are $D$ apart, we obtain a lower bound of $\bigOmega(k/\log(n)+D)$ in the classical CONGEST model.
    
    For the quantum lower bound, we use the same reduction, together with the fact that set disjointness on a line requires $\bigOmega(\sqrt[3]{kD^2}+\sqrt{k})$ rounds in the quantum CONGEST model~\cite{magniez2020quantum}.
\end{proof}
Note that there exists a trivial $\bigO(k/\log(n)+D)$ classical CONGEST algorithm to solve the problem where all nodes send all their values to a leader through the BFS tree. This can be seen as coming from a trivial classical parallel-query algorithm: query all values in one batch of size $p=k$. 

\subsection{Element Distinctness}
Next, we consider the problem of element distinctness. When generalizing the query version of this problem to the CONGEST model, two natural versions arise. One option is that each node holds a vector $x^{(v)}\in [N]^k$, and the vector we want to perform element distinctness on is the sum of these: $x=\sum_{v\in V}x^{(v)}$. Alternatively, we may have that each node holds an entry of the complete vector, i.e., we have $x^{(v)}\in [N]$ and look at $x=(x^{(v)})_{v\in V}$. The problems are solved very similarly, both by using the parallel-query algorithm for element distinctness. 

\begin{lemma}[Element Distinctness in Distributed Vector]\label{lem:elldistvec}
Suppose each node $v$ has a vector $x^{(v)}\in [N]^k$, element distinctness for $x=\sum_v x^{(v)}$ can be solved with probability at least $2/3$ in \[\bigO\left( (k^{2/3}D^{1/3}+D)\left(\ceil{\frac{\log(N)}{\log(n)}}+\ceil{\frac{\log(k)}{\log(n)}}\right)\right)\] rounds in Quantum CONGEST. 
\end{lemma}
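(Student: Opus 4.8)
The plan is to instantiate \autoref{thm:mainthm} with the parallel-query element distinctness algorithm of \autoref{lm:element-distinctness}. Concretely, I would take $F \colon A^{k} \to R$ to be the element distinctness function, where $R$ consists of the unordered index pairs together with a dedicated ``distinct'' symbol, and choose the semigroup $(A,\oplus)$ as follows: fix a prime $M$ with $nN < M = \bigO(nN)$, let $A = \mathbb{Z}_M$, and let $\oplus = +$. Then $(A,\oplus)$ is a commutative group, hence in particular a commutative semigroup, and since every coordinate of $\sum_{v\in V} x^{(v)}$ is an integer in $\{0,\dots,nN\} \subsetneq \{0,\dots,M-1\}$, no wraparound occurs, so $F\trm{\bigoplus_{v\in V} x^{(v)}}$ equals element distinctness of the integer vector $x = \sum_v x^{(v)}$ exactly. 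The relevant bandwidth parameter is $q = \ceil{\log|A|} = \ceil{\log M} = \bigO(\log n + \log N)$, so $\ceil{q/\log(n)} = \bigO\trm{\ceil{\log(N)/\log(n)}}$.

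Next I would invoke \autoref{lm:element-distinctness}: for every batch size $p$, the function $F$ admits a $\trm{\bigO(\ceil{k^{2/3}/p^{2/3}}),\,p}$-parallel-query quantum algorithm (and the underlying quantum walk also correctly reports when no collision exists, so the decision version is covered). I would set $p = D$, which is the natural balance point since one cannot usefully parallelize queries beyond the diameter; this gives $b = \bigO(\ceil{(k/D)^{2/3}}) = \bigO(k^{2/3}/D^{2/3} + 1)$.

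Finally, plug $b$, $p = D$, and $q$ into the round bound of \autoref{thm:mainthm}, namely
\[
  \bigO\trm{D + b\cdot\trm{(D+p)\ceil{\tfrac{q}{\log(n)}} + p\ceil{\tfrac{\log(k)}{\log(n)}}}}.
\]
Here $D + p = 2D$ and $b\cdot D = \bigO(k^{2/3}D^{1/3} + D)$, so the whole expression collapses to
\[
  \bigO\trm{(k^{2/3}D^{1/3} + D)\trm{\ceil{\tfrac{\log(N)}{\log(n)}} + \ceil{\tfrac{\log(k)}{\log(n)}}}},
\]
which is exactly the claimed bound. In the regime $D \geq k$ one has $D \geq k^{2/3}D^{1/3}$, so the estimate reduces to $\bigO\trm{D(\cdots)}$, consistent with the trivial streaming upper bound and with the fact that \autoref{lm:element-distinctness} already yields $b = \bigO(1)$ once $p \gtrsim k$. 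The leader, who runs the central algorithm, learns the output directly; if all nodes should learn it, broadcasting the answer costs an extra $\bigO(D)$ rounds, which is absorbed.

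I do not expect a genuine obstacle: \autoref{thm:mainthm} does the heavy lifting. The only point requiring care is the choice of the ambient semigroup $A$, which must simultaneously be closed, commutative and associative, must reproduce the integer column sums without introducing spurious collisions through overflow, and must have $|A|$ small enough that $q = \ceil{\log|A|}$ contributes only the stated $\ceil{\log(N)/\log(n)}$ factor. Using $\mathbb{Z}_M$ for a prime $M$ slightly larger than $nN$ handles all three requirements at once; alternatively one could take $A = \{0,\dots,nN\}$ with the saturating addition $a\oplus b = \min(a+b,nN)$, which is likewise a commutative semigroup and never saturates on valid inputs.
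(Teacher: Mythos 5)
Your proposal is correct and follows essentially the same route as the paper: instantiate \autoref{lm:element-distinctness} with $p=D$ to get $b=\bigO(\ceil{k^{2/3}/D^{2/3}})$ and plug into \autoref{thm:mainthm} with coordinatewise addition over a domain of size $\bigO(nN)$ (the paper simply writes $A=[Nn]$, $\oplus=+$; your $\mathbb{Z}_M$ or saturating-addition variant is a slightly more careful packaging of the same semigroup choice and changes nothing asymptotically). No gaps.
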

\begin{proof}
    We use the parallel-query algorithm of \autoref{lm:element-distinctness} with $p=D$, this gives us $b=\bigO(\ceil{k^{2/3}/D^{2/3}})$. Now we apply \autoref{thm:mainthm} with $A=[Nn]$ and $\oplus = +$, which gives us \begin{align*}
        \bigO\left( D+b\left((D+p)\ceil{\frac{q}{\log(n)}}+p\ceil{\frac{\log(k)}{\log(n)}}\right)\right) &= \bigO\left( D+\ceil{\frac{k^{2/3}}{D^{2/3}}}\left(2D\ceil{\frac{\log(N)}{\log(n)}}+D\ceil{\frac{\log(k)}{\log(n)}}\right)\right) \\
        &= \bigO\left((k^{2/3}D^{1/3}+D)\left(\ceil{\frac{\log(N)}{\log(n)}}+\ceil{\frac{\log(k)}{\log(n)}}\right)\right)
    \end{align*}
    rounds.
\end{proof}

In the next lemma we give a lower bound for this problem in the classical CONGEST model, which our quantum algorithm beats for large enough $k$. 

\begin{lemma}\label{lm:LB_element_distinctness_vector}
The element distinctness in distributed vector problem requires $\bigOmega(k/\log(n)+D)$ rounds in the classical CONGEST model and $\bigOmega(\sqrt[3]{kD^2}+\sqrt{k})$ rounds in the quantum CONGEST model.
\end{lemma}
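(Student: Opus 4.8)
The plan is to follow the reduction used for \autoref{lm:LB-meeting} almost verbatim, reducing two-party set disjointness on a path to element distinctness in a distributed vector. Recall that in two-party disjointness the players hold $x^{(A)},x^{(B)}\in\zo^{k/2}$ and must decide whether some index $i$ satisfies $x^{(A)}_i=x^{(B)}_i=1$; this needs $\bigOmega(k)$ bits of communication classically~\cite{KS87,Razborov90}, and, when the two inputs sit at the endpoints of a path of length $D$, $\bigOmega(\sqrt[3]{kD^2}+\sqrt{k})$ rounds in Quantum CONGEST~\cite{magniez2020quantum}, both even for constant success probability.

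First I would build a gadget turning a disjointness instance into a length-$k$ vector over $[N]$ (for some $N=\bigO(k)$) whose entries have a repeated value exactly when the sets intersect. Work on the path with $v_A$ and $v_B$ at distance $D$. For $i\le k/2$ let the target value be $x_i:=i$ if $x^{(A)}_i=1$ and $x_i:=k+2i$ otherwise; for the position $k/2+j$ with $j\le k/2$ let $x_{k/2+j}:=j$ if $x^{(B)}_j=1$ and $x_{k/2+j}:=k+2j+1$ otherwise. A short case analysis shows the ``default'' values $k+2i$ and $k+2j+1$ lie strictly above the ``index'' values $i,j\in[k/2]$, and are even versus odd, so no collision can involve a default; hence the only possible equal pair is $x_i=x_{k/2+i}=i$, which occurs iff $x^{(A)}_i=x^{(B)}_i=1$. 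Thus $x$ contains a duplicate iff the sets intersect, and all entries lie in $[2k+1]$. We realize $x=\sum_v x^{(v)}$ by letting $v_A$ hold the first-half target values (and a default in the second half), $v_B$ hold the second-half target values (and a default in the first half), and every other node hold the all-zeros vector; if $0\notin[N]$ one instead shifts all target values up by $n$ and lets each of the $n-1$ uninvolved nodes contribute the value $1$.

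Given the gadget, a classical (resp.\ quantum) CONGEST algorithm for element distinctness in a distributed vector yields a two-party protocol for disjointness on the path: $A$ simulates $v_A$, $B$ simulates all remaining nodes, and the only communication crosses the single edge incident to $v_A$, so an $r$-round CONGEST algorithm becomes an $r$-round path protocol exchanging $\bigO(\log(n))$ qubits/bits per round. Comparing with the bounds above, the classical case forces $\bigO(r\log(n))=\bigOmega(k)$, i.e.\ $r=\bigOmega(k/\log(n))$, the quantum case gives $r=\bigOmega(\sqrt[3]{kD^2}+\sqrt{k})$, and since $v_A$ and $v_B$ are $D$ apart and the output depends on both their inputs we additionally get $r=\bigOmega(D)$, yielding $\bigOmega(k/\log(n)+D)$ classically.

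The main (and essentially only) obstacle is getting the collision gadget right: one must choose the padding values so that no spurious duplicate appears for \emph{any} pair of input strings, while a duplicate is forced precisely on an intersecting coordinate. The remaining points — expressing the target vector as an element-wise sum over all $n$ nodes, and the message-size translation between CONGEST and two-party communication — are routine and identical to \autoref{lm:LB-meeting}.
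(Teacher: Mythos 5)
Your proposal is correct and follows essentially the same route as the paper: a reduction from two-party set disjointness with $v_A$ and $v_B$ at distance $D$, a collision gadget encoded in the summed vector, the simulate-one-node-vs-the-rest argument for the classical $\bigOmega(k/\log(n))$ bound, the trivial $\bigOmega(D)$ term, and the quantum bound imported from disjointness on a line~\cite{magniez2020quantum}. The only difference is cosmetic — your padding uses a parity trick ($k+2i$ versus $k+2j+1$) where the paper uses disjoint offset ranges ($2k+i$ versus $3k+i$) with zeros in the unused halves — and both gadgets are verified correctly.
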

\begin{proof}
We will reduce two-party disjointness to element distinction, as in \autoref{lm:LB-meeting}. Let parties $A$ and $B$ have inputs $x^{(A)} \in \zo^k$ and $x^{(B)} \in \zo^k$ respectively. The disjointness problem is if $\max_j(x^{(v_A)}_j+x^{(v_B)}_j)=2$. This takes takes $\bigOmega(k)$ rounds of communication with messages of size $\bigO(1)$ \cite{KS87,Razborov90}, even if we just require to be correct with constant probability. Now suppose there is a $\littleO(k/\log(n))$-round CONGEST algorithm for the element distinctness in distributed vector problem, then this gives rise to a $\littleO(k)$-round algorithm for the two party disjointness problem in the following manner. Again, we consider two nodes $v_A$ and $v_B$ that are $D$ hops apart. Suppose we have $x^{(v)}=0^{2k}$ for $v\neq v_A,v_B$ and
\begin{align*}
    x^{(v_A)}_i &= \begin{cases} i &\text{if } x^{(A)}_i=1 \text{ and }i \leq k;\\
    2k+i &\text{if } x^{(A)}_i=0 \text{ and }i \leq k;\\
    0 &\text{if }i>k;\end{cases}\\
    x^{(v_B)}_i &= \begin{cases} 0 &\text{if }i \leq k;\\
    i-k &\text{if } x^{(B)}_{i-k}=1 \text{ and }i > k;\\
    3k+i &\text{if } x^{(B)}_{i-k}=0 \text{ and }i > k. \end{cases}
\end{align*}
Now note that we have a collision in $x^{(v_A)}+x^{(v_B)}$ if and only if $\max_j(x^{(v_A)}_j+x^{(v_B)}_j)=2$. Hence distinctness in distributed vector solves element disjointness. We make the CONGEST algorithm into a two-party algorithm by letting $A$ simulate $v_A$ and $B$ simulate all other nodes. As before, this gives a $\bigOmega(k/\log(n))$ lower bound. 

Moreover, since the nodes are $D$ apart, we also have a $\bigOmega(D)$ lower bound, giving us $\bigOmega(k/\log(n)+D)$ in total. 
    
For the quantum lower bound, we use the same reduction, together with the fact that set disjointness on a line requires $\bigOmega(\sqrt[3]{kD^2}+\sqrt{k})$ rounds in the quantum CONGEST model~\cite{magniez2020quantum}.
\end{proof}

We note that this also implies an algorithm for a more natural version of the element distinctness problem in CONGEST, where each node holds one value and we want to know whether these are all distinct. However, as pointed out by an anonymous reviewer, there also exists an $\bigOt(\sqrt{nD})$ round algorithm using hash functions and a (parallel) Grover's search. 

\begin{corollary}[Element Distinctness Between Nodes]
Suppose each node $v$ holds a value $x^{(v)}\in [N]$, we can solve the element distinctness problem for $x=(x^{(v)})_{v\in V}$ with probability at least $2/3$ in \[
\bigO\left((n^{2/3}D^{1/3}+D)\ceil{\frac{\log(N)}{\log(n)}}\right)\]
rounds in the Quantum CONGEST model. 
\end{corollary}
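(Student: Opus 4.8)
The plan is to reduce this to \autoref{lem:elldistvec} (element distinctness in a distributed vector) with vector length $k=n$. The point is that the distributed-vector formulation already captures ``one value per node'' as soon as the coordinate set is taken to be $[n]$: if each node $v$ holds the length-$n$ vector $y^{(v)}$ equal to $x^{(v)}$ in one coordinate reserved for $v$ and $0$ in all other coordinates, then the $j$-th coordinate of $\sum_{v\in V} y^{(v)}$ is exactly the value held by the node owning coordinate $j$. Hence $y:=\sum_v y^{(v)}$ is a permutation of the multiset of node-values, and $y$ has a repeated entry if and only if two nodes hold the same value, so running \autoref{lem:elldistvec} on $y$ solves the problem.

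The only real work is assigning the nodes distinct coordinates in $[n]$. A priori the nodes only carry identifiers of $\bigO(\log n)$ \emph{bits}, i.e.\ in a range of size $\poly(n)$, and using those directly as coordinates would blow $k$ up to a polynomial in $n$. So I would first relabel: elect a leader and build a BFS tree in $\bigO(D)$ rounds (as after \autoref{lm:The-Van-Apeldoorn-De-Vos-Lemma}), aggregate subtree sizes up the tree ($\bigO(\log n)$-bit numbers, hence $\bigO(D)$ rounds with pipelining), then have each node hand each child a contiguous range of labels, one for each node of that child's subtree, and propagate these ranges down the tree in another $\bigO(D)$ rounds. After this every node $v$ has a unique label $\ell(v)\in[n]$ and can form $y^{(v)}$ as above, over the alphabet $\{0,1,\dots,N\}$ (whose size $N+1$ only changes the eventual $N$ to $N+1$ in \autoref{lem:elldistvec}, which is immaterial).

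Applying \autoref{lem:elldistvec} with $k=n$ then costs
\[
\bigO\left((n^{2/3}D^{1/3}+D)\left(\ceil{\frac{\log(N)}{\log(n)}}+1\right)\right)
= \bigO\left((n^{2/3}D^{1/3}+D)\ceil{\frac{\log(N)}{\log(n)}}\right)
\]
rounds, where I used $\ceil{\log(n)/\log(n)}=1\le\ceil{\log(N)/\log(n)}$. The $\bigO(D)$ rounds spent relabeling, and a final $\bigO(D)$-round broadcast down the BFS tree of the pair of indices found by the leader (so that the two colliding nodes learn the answer), are absorbed into this bound.

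I do not expect a genuine obstacle: the substance is recognizing that \autoref{lem:elldistvec} is the right black box and that the $\poly(n)$-sized native identifiers must be compressed to $[n]$ first, which is a routine $\bigO(D)$-round subtree-counting computation. The one point to handle with a little care is translating the ``index $j$'' returned by the query algorithm back to the physical node carrying label $j$, which is exactly what the concluding broadcast does.
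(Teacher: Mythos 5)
Your proposal is correct and takes the same route as the paper: the paper's proof is the single sentence that the corollary ``follows directly from \autoref{lem:elldistvec} by redefining the input for each node to be a length $k=n$ vector where just one element is non-zero and equal to its actual input.'' The only thing you add is the explicit $\bigO(D)$-round relabeling of the $\poly(n)$-sized identifiers down to $[n]$ via subtree counting, a routine detail the paper leaves implicit, so your write-up is just a more careful version of the same argument.
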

\begin{proof}
This follows directly from \autoref{lem:elldistvec} by redefining the input for each node to be a length $k=n$ vector where just one element is non-zero and equal to its actual input.
\end{proof}

Again, we give a lower bound in the classical CONGEST model, which our quantum algorithm beats for small enough $D$. 

\begin{lemma}\label{lm:LB_element_distinctness_nodes}
The element distinctness between nodes problem requires $\Omega(n/\log(n))$ rounds in the classical CONGEST model and $\bigOmega(\sqrt[3]{nD^2}+\sqrt{n})$ rounds in the quantum CONGEST model. 
\end{lemma}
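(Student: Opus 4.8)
The plan is to reduce two-party set disjointness to element distinctness between nodes, in the same spirit as the proof of \autoref{lm:LB_element_distinctness_vector}, the difference being that a node now holds only a single value, so each player's $k$-bit input has to be spread over $\Theta(k)$ nodes rather than placed on one. Throughout I assume $N\ge n$ (if $N<n$ the instance is ``not distinct'' by pigeonhole and the problem is degenerate). I take $k=\Theta(n)$ as large as possible subject to $2k+D+1\le n$ (so the vertex count fits) and $4k+D+1\le N$ (so all the encoded values fit); both are satisfiable with $k=\Theta(n)$ whenever $D$ is at most a constant fraction of $n$. When $D$ is larger than that, $D=\Theta(n)$, and the trivial $\bigOmega(D)$-round bound (information must cross the diameter) already gives $\bigOmega(\sqrt[3]{nD^2}+\sqrt n)$, since then $\sqrt[3]{nD^2}=\bigO(D)$ and $\sqrt n=\bigO(D)$; likewise it gives the classical $\bigOmega(n/\log n)$ bound. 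So from now on I assume $k=\Theta(n)$.

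The network is a path $u_0-u_1-\cdots-u_D$ with pendant nodes $a_1,\dots,a_k$ all attached to $u_0$ and pendant nodes $b_1,\dots,b_k$ all attached to $u_D$; pad with at most one extra pendant so the total is exactly $n$. Its diameter is $\Theta(D)$. Given a disjointness instance $x^{(A)},x^{(B)}\in\zo^k$, I encode it as in \autoref{lm:LB_element_distinctness_vector}: put $x^{(a_i)}=i$ if $x^{(A)}_i=1$ and $x^{(a_i)}=2k+i$ otherwise, put $x^{(b_i)}=i$ if $x^{(B)}_i=1$ and $x^{(b_i)}=3k+i$ otherwise, and give $u_0,\dots,u_D$ the pairwise-distinct dummy values $4k+1,\dots,4k+D+1$ (this is where $N\ge n$ is used, via the choice of $k$). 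A short case check over the possible coincidences of these values shows that two nodes share a value if and only if $x^{(a_i)}=x^{(b_i)}=i$ for some $i$ with $x^{(A)}_i=x^{(B)}_i=1$; equivalently, the node values are all distinct iff the two sets are disjoint. Hence any algorithm for element distinctness between nodes decides $\mathrm{DISJ}(x^{(A)},x^{(B)})$ on this network, even with constant error.

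Next I would convert a round-efficient element-distinctness algorithm into a round-efficient protocol for disjointness. The only edge crossing the cut that separates $\{a_1,\dots,a_k,u_0,\dots,u_{\ceil{D/2}}\}$ from the rest is $u_{\ceil{D/2}}u_{\ceil{D/2}+1}$, and the pendants at each end are incident only to an endpoint of the path; so running the algorithm on the bare $D$-edge path $u_0-\cdots-u_D$, with $u_0$ internally simulating all of $a_1,\dots,a_k$ and $u_D$ internally simulating all of $b_1,\dots,b_k$ (each endpoint knows its own encoded values), yields, with the same round count and $\bigO(\log n)$-qubit messages, an algorithm on a line of length $D$ that computes $\mathrm{DISJ}_k$. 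For the classical bound, after $R$ rounds only $\bigO(R\log n)$ bits have crossed the single cut edge while $\mathrm{DISJ}_k$ needs $\bigOmega(k)$ bits~\cite{KS87,Razborov90}, so $R=\bigOmega(k/\log n)=\bigOmega(n/\log n)$. For the quantum bound, I invoke the $\bigOmega(\sqrt[3]{kD^2}+\sqrt k)$-round lower bound for set disjointness on a line of length $D$~\cite{magniez2020quantum}, exactly as in \autoref{lm:LB_element_distinctness_vector}, which with $k=\Theta(n)$ is $\bigOmega(\sqrt[3]{nD^2}+\sqrt n)$.

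The step needing care is that the line lower bound of~\cite{magniez2020quantum} must survive attaching the input-carrying pendants at the two ends of the path. This is fine precisely because those pendants are only ever touched by the two endpoint parties and are absorbed into their (unbounded) local computation, so the reduction produces a genuine protocol on the bare $D$-edge line whose only inter-player ``wire'' is the one already present in the line model; as in \autoref{lm:LB_element_distinctness_vector} we are slightly cavalier about the $\log$-factor coming from the $\bigO(\log n)$ message size, which the cited line bound tolerates. The remaining work is routine bookkeeping: the collision case analysis sketched above, and adjusting the number of path nodes and pendants so the vertex count is exactly $n$ while keeping the diameter $\Theta(D)$ and $k=\Theta(n)$.
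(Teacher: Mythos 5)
Your proposal is correct and follows essentially the same route as the paper: a reduction from two-party set disjointness in which the two players' inputs are encoded on degree-one nodes attached near the two ends of the network, with the classical bound from the $\bigOmega(k)$ communication lower bound across the single cut edge and the quantum bound from the set-disjointness-on-a-line lower bound of~\cite{magniez2020quantum}. The only differences are cosmetic: the paper attaches the elements of $S_A$ and $S_B$ directly as star leaves (and is terser about the length-$D$ path needed for the quantum case), whereas you reuse the explicit encoding from \autoref{lm:LB_element_distinctness_vector} and spell out the padding and the $D=\Theta(n)$ corner case.
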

\begin{proof}
We will reduce from two-party disjointness, as in \autoref{lm:LB_element_distinctness_vector}. Let $S_A$ be the indices for which the input of $v_A$ is $1$, and let $S_B$ be similarly defined for $v_B$. Consider a network that consists of two star graphs of sizes $|S_A|$ and $|S_B|$, connected with an edge between their centers. Each point of the star gets a single element of $S_A$ or $S_B$. If there would be a classical CONGEST algorithm that solves element distinctness on this network and for this input, then this would also solve the disjointness problem. As the two players could each simulate one of the stars, and would only have to communicate for the bits send over the connecting edge, at least $\bigOmega(n)$ bits would have to be send over this edge \cite{KS87,Razborov90}. Hence the element distinctness between nodes problem requires at least $\Omega(n/\log(n))$ rounds in the classical CONGEST model.

For the quantum lower bound, we use the same reduction, together with the fact that set disjointness on a line requires $\bigOmega(\sqrt[3]{nD^2}+\sqrt{n})$ rounds in the quantum CONGEST model~\cite{magniez2020quantum}.
\end{proof}

\subsection{Distributed Deutsch-Jozsa}
In this section, we consider a distributed version of the Deutsch-Jozsa problem. The original problem was first introduced by Deutsch and Jozsa~\cite{DJ92} as a computational problem, where the goal was to minimize query complexity. They showed there was an exponential zero-error quantum speedup with respect to any zero-error classical algorithm for this problem. Later, the problem was considered as a two-party communication problem by Buhrman, Cleve, and Wigderson~\cite{BCW98}. Also here, the authors showed that quantum communication allows for an exponential speedup with respect to any zero-error classical communication protocol. In this section we make the natural generalization to $n$ parties. Again, we show that this gives an exponential speedup with respect to any zero-error classical CONGEST algorithm.

\begin{problem}[Distributed Deutsch-Jozsa]
    Let $k$ be an even positive integer. Each node $v \in V$ in the network receives a string $x^{(v)}\in \zo^k$ with the promise that for $x = \bigoplus_{v\in V} x^{(v)}$ (element-wise XOR) we have either:
        \begin{enumerate}
            \item $x$ is constant, i.e., $x\in \{0^k,1^k\}$, or
            \item $x$ is balanced, i.e., $|x| = k/2$.
        \end{enumerate}
        Determine which of these cases holds.
\end{problem}

For this definition we directly get the following result.
\begin{theorem}
    In the Quantum CONGEST model, the Distributed Deutsch-Jozsa problem can be solved with probability 1 using $\bigO(D\ceil{\frac{\log(k)}{\log(n)}})$ rounds of communication. 
\end{theorem}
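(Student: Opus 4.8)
The plan is to derive this as a direct instance of \autoref{thm:mainthm}, using the single-query (and exact) Deutsch--Jozsa algorithm as the parallel-query algorithm being simulated.

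First I would recall the textbook Deutsch--Jozsa algorithm, recast as a $(1,1)$-parallel-query quantum algorithm in the sense of \autoref{def:batched-query} for the Boolean function $F\colon\zo^k\to\zo$ that decides whether its input string is constant or balanced. Concretely: starting from $\ket 0$, apply a unitary creating the uniform superposition $\frac{1}{\sqrt{k}}\sum_{i\in[k]}\ket i$; make one phase query $\ket i\mapsto(-1)^{x_i}\ket i$; apply the inverse of the state-preparation unitary; and measure. The amplitude on $\ket 0$ is then $\frac{1}{k}\sum_{i}(-1)^{x_i}$, which equals $\pm 1$ when $x$ is constant and $0$ when $x$ is balanced, so outputting ``constant'' exactly when the measurement yields $\ket 0$ is correct with probability $1$. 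Since only the first column of the state-preparation unitary is used, the argument is insensitive to whether $k$ is a power of two, and all non-query operations are free internal computation.

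Next I would apply \autoref{thm:mainthm} with $A=\zo$ and $\oplus$ the bitwise XOR, which makes $(A,\oplus)$ a commutative semigroup, so that $x=\bigoplus_{v\in V}x^{(v)}$ is exactly the string the Deutsch--Jozsa algorithm queries; here $q=\ceil{\log|A|}=1$ and $F$ has parallel-query complexity $(b,p)=(1,1)$. Substituting $b=p=1$, $q=1$ into the round bound of the theorem yields $\bigO(D+(D+1)\ceil{1/\log(n)}+\ceil{\log(k)/\log(n)})=\bigO(D\ceil{\frac{\log(k)}{\log(n)}})$ rounds. To get probability $1$ rather than $2/3$ I would note that the simulation is error-free: leader election (e.g.\ picking the largest identifier), the BFS-tree construction, and the forward and reverse state (un)distributions of \autoref{lm:The-Van-Apeldoorn-De-Vos-Lemma} are all deterministic or unitary, so the sole source of randomness is the final Deutsch--Jozsa measurement, whose outcome determines the answer correctly every time.

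There is no real obstacle here --- the result is a plug-in application --- and the only points requiring a second glance are bookkeeping: confirming that the oracle underlying \autoref{thm:mainthm} (a standard oracle to the combined string $\bigoplus_{v}x^{(v)}$) is precisely what Deutsch--Jozsa consumes, that a single-query batch is just an ordinary query, and that passing to arbitrary even $k$ does not break exactness. Specializing to a two-node network recovers the two-party protocol of Buhrman, Cleve, and Wigderson~\cite{BCW98}.
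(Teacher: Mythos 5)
Your proposal is correct and follows exactly the paper's route: the paper's own proof is the one-line observation that the statement follows from applying \autoref{thm:mainthm} to the $\bigO(1)$-query Deutsch--Jozsa algorithm, and you have simply filled in the instantiation ($A=\zo$, $\oplus=$ XOR, $b=p=q=1$) and the exactness bookkeeping. Nothing to change.
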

\begin{proof}This follows directly from an application of \autoref{thm:mainthm} to the $\bigO(1)$ query algorithm for the Deutsch-Jozsa problem~\cite{DJ92}.

\end{proof}

As said, this is an exponential speedup with respect to any classical algorithm that has to be correct with probability 1.
\begin{theorem}
    In the classical CONGEST model, the Distributed Deutsch-Jozsa problem requires at least $\bigOmega(k/\log(n)+D)$ rounds of communication to be solved with probability $1$.
\end{theorem}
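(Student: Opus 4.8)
The plan is to reduce from the two-party Deutsch--Jozsa communication problem, exactly mirroring the reductions used for \autoref{lm:LB-meeting} and \autoref{lm:LB_element_distinctness_vector}. In the two-party version, party $A$ holds $y\in\zo^k$ and party $B$ holds $z\in\zo^k$ under the promise that $y\oplus z$ is constant or balanced, and they must decide which; Buhrman, Cleve, and Wigderson~\cite{BCW98} (building on~\cite{DJ92}) showed that any \emph{deterministic} protocol for this requires $\bigOmega(k)$ rounds of communication with messages of size $\bigO(1)$. It is crucial that we insist on zero error here: the randomized communication complexity of this problem is only $\bigO(\log k)$, so a randomized reduction would prove nothing.

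For the $\bigOmega(k/\log(n))$ term, I would take the graph consisting of two nodes $v_A$ and $v_B$ at distance $D$ (joined by a path), set $x^{(v_A)}=y$, $x^{(v_B)}=z$, and $x^{(v)}=0^k$ for every other node $v$. Then $x=\bigoplus_{v\in V}x^{(v)}=y\oplus z$, so the CONGEST instance satisfies the promise exactly when the two-party instance does, and an algorithm solving the former decides the latter after one extra round. Suppose there were a zero-error CONGEST algorithm using $\littleO(k/\log(n))$ rounds. Letting $A$ simulate $v_A$ and $B$ simulate all other nodes, the only communication needed is the pair of $\bigO(\log(n))$-bit messages crossing the edge incident to $v_A$ in each round, yielding a zero-error two-party protocol using $\littleO(k/\log(n))$ rounds with $\bigO(\log(n))$-bit messages; reducing the message size to $\bigO(1)$ gives a $\littleO(k)$-round deterministic protocol for two-party Deutsch--Jozsa, contradicting the bound above.

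For the $\bigOmega(D)$ term, I would invoke a standard causality (light-cone) argument on the same graph: after $t<D$ rounds the state of $v_A$ depends only on the inputs of nodes within distance $t$, hence not on $x^{(v_B)}$. Since $x^{(v_A)}=0^k,\ x^{(v_B)}=0^k$ is a constant instance while $x^{(v_A)}=0^k,\ x^{(v_B)}=1^{k/2}0^{k/2}$ is a balanced instance, and these two instances differ only at $v_B$, any algorithm terminating in fewer than $D$ rounds would force $v_A$ to output the same answer on both, a contradiction. Combining the two bounds yields $\bigOmega(k/\log(n)+D)$.

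The only real subtlety is exactness: the separation is meaningful precisely because the classical lower bound is a \emph{deterministic} one, so the reduction must carry zero error all the way through and the invoked $\bigOmega(k)$ bound from~\cite{BCW98} must be the deterministic bound rather than the (trivial) randomized one. Everything else --- the input encoding, the two-nodes-at-distance-$D$ construction, the single-edge simulation, and the light-cone argument --- is routine and identical in spirit to the earlier lower bounds in this section.
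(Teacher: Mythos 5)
Your proposal is correct and follows essentially the same route as the paper: a reduction to two-party Deutsch--Jozsa on a path of length $D$ with the nonzero inputs at the endpoints, the single-edge simulation giving the $\bigOmega(k/\log(n))$ term from the deterministic $\bigOmega(k)$ bound of~\cite{BCW98}, and a distance argument for the $\bigOmega(D)$ term. Your write-up is more explicit than the paper's (which states the construction in two sentences), and your emphasis on carrying zero error through the reduction is exactly the right point to stress.
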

\begin{proof}
    We will reduce two-party Deutsch-Jozsa to distributed Deutsch-Jozsa, similar to \autoref{lm:LB-meeting}. Consider line graph of length $D$, with non-trivial input at ends, and constant zero in between. Now this reduces to the two party complexity of Deutsch-Jozsa, which is $\bigOmega(k)$ \cite{BCW98}, using messages of size $\bigO(1)$. Of course we need at least time $D$ to get any message from $v_A$ to $v_B$, so the total lower bound becomes $\bigOmega(k/\log(n)+D)$. 
\end{proof}

Note this speedup with respect to the classical CONGEST model only holds when we want to outcome to be correct with probability $1$. When we allow for an error probability, there is a simple and fast classical algorithms: pick a small number of indices, check whether these all have value $0$ or all have value $1$, and if so output that $x$ is constant, otherwise output that $x$ is balanced. It remains an interesting open problem to find a natural problem\footnote{In the two player communication setting such results are known, and these could be directly applied to more complicated networks. However, this would give a rather unnatural problem where two nodes in the network receive an input and the network is simply used in place of a simple communication link. } for the CONGEST model where quantum communication gives an exponential speedup over bounded-error classical algorithms as well.

\section{Applications to Graph Problems}\label{sc:graph-problems}
\autoref{cor:thm-with-query-time} show that our framework also works for problems where the query result for each node requires some computation. In this section we consider a few such problems, in particular, we consider graph problems where a classical CONGEST algorithm is used to answer each query.

\subsection{Computing the Diameter, Radius, and Average Eccentricity}
First of all, let us define the eccentricity. 
\begin{definition}
    Given $v\in V$, we define the \emph{eccentricity of $v$}, denoted by $\epsilon(v)$, as
    \[ \epsilon(v):= \max_{u\in V} d(u,v).\]
\end{definition}
By definition, the diameter equals the maximum eccentricity and the radius equals the minimum eccentricity. So the problem of computing the diameter consists of finding the maximum over all the eccentricities of the nodes, and similar for the radius.  Le Gall and Magniez~\cite{le2018sublinear} gave two algorithms for diameter computation in the Quantum CONGEST model, a simpler $\bigO(\sqrt{n}D)$ algorithm and a more complicated $\bigO(\sqrt{nD})$ round algorithm. We recover these results below using our framework using a simplified and more general proof.

For a single node, the eccentricity can be computed in $\bigO(D)$ rounds using a Breadth-first search. We immediately obtain a $\bigO(\sqrt{n}D)$ algorithm for computing the diameter or radius by \autoref{cor:thm-with-query-time} and \autoref{lm:batched-minfinding}. To obtain the upper bound of $\bigO(\sqrt{nD})$ we use the following lemma for the classical CONGEST model.
\begin{lemma}\label{lm:eccentricities-of-a-set}
Let $S\subseteq V$ be a set of nodes. In $\bigO(|S|+D)$ classical CONGEST rounds each node in $S$ can learn its of eccentricity.
\end{lemma}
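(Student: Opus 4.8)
The plan is to reduce the problem to a single ``layered'' breadth-first search that processes the $|S|$ sources in a pipelined fashion along a fixed BFS tree. First I would elect a leader and build a BFS tree $T$ rooted at it in $\bigO(D)$ rounds; each node learns its depth in $T$ and thus a global estimate of $D$ up to constants. Along $T$, the leader collects the identifiers of the nodes in $S$ (each such identifier travels up to the root in a pipelined stream, so all of $S$ is known to the leader after $\bigO(|S| + D)$ rounds), fixes an arbitrary ordering $s_1, \dots, s_{|S|}$ of $S$, and broadcasts this ordered list back down $T$, again in $\bigO(|S| + D)$ pipelined rounds. Now every node knows the full schedule of sources.

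The heart of the argument is a pipelined multi-source BFS: we run $|S|$ breadth-first searches, one from each $s_i$, but staggered in time so that at most $\bigO(1)$ of them occupy any given ``frontier level'' simultaneously. Concretely, source $s_i$ initiates its BFS wave at round (roughly) $i$; thereafter in each round every node forwards, to all neighbors, the set of source-indices $i$ for which it has just (in the previous round) learned that it is reachable from $s_i$, together with the current distance. Because a node $u$ first hears about $s_i$ exactly at round $i + d(u, s_i)$, and $d(u, s_i) \le D$, the waves from distinct sources stay within a window of $\bigO(D)$ rounds of one another; one checks that in any single round a node needs to announce only $\bigO(1)$ source-indices, each of size $\bigO(\log n)$, so the $\bigO(\log n)$ bandwidth suffices. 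When source $s_i$'s wave reaches its last node, that node reports $d(u, s_i)$; taking the maximum over $u$ gives $\epsilon(s_i)$, and this maximum can be computed by aggregating up $T$ — again pipelined over $i$ — so that $s_i$ learns $\epsilon(s_i)$. The last wave is launched at round $\bigO(|S|)$ and finishes after a further $\bigO(D)$ rounds, and the final aggregation adds another $\bigO(|S| + D)$, giving the claimed $\bigO(|S| + D)$ total.

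The main obstacle is the congestion/scheduling analysis: one must argue carefully that the staggering really does keep the per-round, per-edge message volume at $\bigO(\log n)$ bits, i.e.\ that only a constant number of the $|S|$ BFS waves are ``active'' at any node in any round. This is where the choice ``launch $s_i$ at round $i$'' (rather than all at once) is essential, and it is the standard pipelining trick for such multi-source/shortest-path primitives in CONGEST; I would cite or re-derive the elementary fact that a node $u$ hears of $s_i$ for the first time precisely at round $i + d(u,s_i)$, so the active window at $u$ is the set of $i$ with $i + d(u,s_i)$ equal to the current round — a set of size $\bigO(1)$ since consecutive launch times differ by $1$ while distances are integers in $[0,D]$. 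Everything else (the two streaming phases along $T$, the final aggregation of maxima) is a routine application of pipelined broadcast/convergecast on a tree of depth $\bigO(D)$.
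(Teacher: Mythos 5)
The paper does not prove this lemma from scratch: it cites the known fact that BFS trees from $|S|$ sources can be built in $\bigO(|S|+D)$ rounds \cite{PRT12,HW12} and observes that each source can then aggregate the maximum distance over its own tree. You instead try to re-derive the multi-source BFS primitive, and the derivation has a genuine gap in the congestion analysis. Your key claim is that if source $s_i$ is launched at round $i$, then the set of indices $i$ with $i+d(u,s_i)$ equal to the current round has size $\bigO(1)$. This is false: take a path $v_1-v_2-\dots-v_n$ with $S=\{v_1,\dots,v_{|S|}\}$, $s_i=v_i$, and $u=v_j$ for $j>|S|$. Then $d(u,s_i)=j-i$, so $i+d(u,s_i)=j$ for \emph{every} $i$, and all $|S|$ waves reach $v_j$ in the same round over a single edge. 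Spacing consecutive launches by one round does not prevent collisions, because the distances $d(u,s_i)$ can decrease by exactly one as $i$ increases by one. Once you account for this you must serialize the forwarding at each node, and then the clean identity ``$u$ first hears of $s_i$ at round $i+d(u,s_i)$'' — on which your whole running-time analysis rests — no longer holds.

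The statement is nevertheless true, but the correct proofs use a different scheduling or a different analysis. One option is the token-based scheduling of Holzer and Wattenhofer, where the start times satisfy $t_{i+1}-t_i \geq d(s_i,s_{i+1})+1$, which guarantees (via the triangle inequality) that at every node the waves arrive in strictly separated rounds; making this yield $\bigO(|S|+D)$ for an arbitrary subset $S$ requires additional care. Another is the priority-based forwarding argument (as in source detection \`a la Lenzen--Peleg, or in \cite{PRT12,HW12}): all sources start simultaneously, each node forwards at most one pending source--distance pair per round giving priority to the smallest distance, and an induction shows that $u$ learns $d(u,s_i)$ within $d(u,s_i)+|S| \leq D+|S|$ rounds. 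Either fix requires a real argument, not the one-line collision claim. Your surrounding steps (collecting and broadcasting the ordered list of $S$, and the pipelined convergecast of the maxima) are fine.
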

\begin{proof}
    This is a direct consequence from the well known fact that we can construct BFS trees from $|S|$ sources in time $\bigO(|S|+D)$, see for example ~\cite{PRT12,HW12}. 
\end{proof}

The following result retrieves the bound of \cite{le2018sublinear} for computing the diameter, and generalizes to the radius. 
\begin{lemma}
    Given a graph $G=(V,E)$, we can compute the lowest or highest eccentricity in $\bigO(\sqrt{nD})$ rounds in the Quantum CONGEST model with success probability at least $2/3$.
\end{lemma}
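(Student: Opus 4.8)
The plan is to combine the parallel-query minimum/maximum finding algorithm of \autoref{lm:batched-minfinding} with \autoref{cor:thm-with-query-time}, choosing the batch size $p$ so that the cost of answering a batch of eccentricity queries (via \autoref{lm:eccentricities-of-a-set}) is balanced against the number $b$ of batches. First I would set up the standard reduction: computing the diameter is finding $\max_{v\in V}\epsilon(v)$, and computing the radius is finding $\min_{v\in V}\epsilon(v)$, so both are instances of extremal finding over the length-$n$ ``input string'' $(\epsilon(v))_{v\in V}$. This string is not known in advance — each entry requires a BFS — so I invoke \autoref{cor:thm-with-query-time} rather than \autoref{thm:mainthm} directly, with $\alpha(p)$ the number of classical CONGEST rounds needed to answer a batch of $p$ eccentricity queries.

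The key step is bounding $\alpha(p)$. Given a batch of $p$ node-indices $v_1,\dots,v_p$ distributed to the network by \autoref{lm:The-Van-Apeldoorn-De-Vos-Lemma}, each node learns whether it is one of the $p$ chosen sources, so the chosen sources form a set $S$ with $|S|\le p$; by \autoref{lm:eccentricities-of-a-set} all nodes of $S$ can learn their eccentricities in $\bigO(|S|+D)=\bigO(p+D)$ classical CONGEST rounds, after which each chosen source holds its query answer $x^{(v_i)}_{j_i}=\epsilon(v_i)$. Hence $\alpha(p)=\bigO(p+D)$. Plugging $b=\bigO(\ceil{\sqrt{n/p}})$, $q=\bigO(\log n)$, $k=n$ into the bound of \autoref{cor:thm-with-query-time} gives $\bigO\trm{D + \sqrt{n/p}\cdot(D+p)}$ rounds (up to the ceiling and log-ratio factors, which are $\bigO(1)$ here since $q,\log k = \bigO(\log n)$). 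Optimizing, the two terms $\sqrt{n/p}\cdot D$ and $\sqrt{n/p}\cdot p = \sqrt{np}$ are balanced when $p=D$, yielding $\bigO(D + \sqrt{n/D}\cdot D) = \bigO(\sqrt{nD})$ rounds (using $D\le n$ so $\sqrt{nD}\ge D$).

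I expect the main obstacle to be the accounting in \autoref{cor:thm-with-query-time} when $p=D$ but $D$ is small relative to $\log n$ — one must check that the ceilings $\ceil{q/\log n}$, $\ceil{\log k/\log n}$, and the ``$p+D$'' in $\alpha(p)$ all collapse to the claimed form, and that taking $p=D$ is legitimate (it is, since $D$ is known after the initial BFS, or can be learned to within a factor of $2$). A minor point to handle carefully is that $\bigO(\ceil{\sqrt{n/p}})$ from \autoref{lm:batched-minfinding} is $\bigO(\ceil{\sqrt{n/D}})$, which is $\bigO(\sqrt{n/D})$ only when $D\le n$; since $D\le n$ always holds this is fine, and when $D$ is very large (close to $n$) the $\bigO(D)$ term dominates and the bound $\bigO(\sqrt{nD})$ still holds. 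Finally, boosting the success probability from $2/3$ is free up to the stated constant, as noted in the paper's conventions.
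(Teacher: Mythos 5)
Your proposal is correct and follows essentially the same route as the paper: run the parallel minimum/maximum finding algorithm of \autoref{lm:batched-minfinding} with $p=D$, so $b=\bigO(\ceil{\sqrt{n/D}})$, and apply \autoref{cor:thm-with-query-time} with $\alpha(p)=\bigO(p+D)=\bigO(D)$ coming from the multi-source BFS bound of \autoref{lm:eccentricities-of-a-set}. Your extra remarks on balancing the terms to justify $p=D$ and on the ceiling factors are sound but not needed beyond what the paper's own calculation already does.
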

\begin{proof}
   We run a parallel minimum finding algorithm, see \autoref{lm:batched-minfinding}, with $p=D$, which implies we have $b=\bigO(\ceil{\sqrt{n/D}})$ for our $(b,p)$-parallel-query quantum algorithm. Now we use \autoref{cor:thm-with-query-time} with $A=[n]$, so $q=\ceil{\log(n)}$. By \autoref{lm:eccentricities-of-a-set} we can compute the eccentricities of $D$ nodes in $\alpha(p)=\alpha(D)=O(D)$ rounds. Hence we get a total number of rounds of
        \begin{align*}
        \bigO\trm{D +b \cdot \trm{(D+p) \ceil{\frac{q}{\log(n)}} + p\ceil{\frac{\log(\tilde{k})}{\log(n)}}+\alpha(p)}} = \bigO\trm{D + \ceil{\sqrt{n/D}}D}= \bigO(\sqrt{nD}).
    \end{align*}
\end{proof}

This technique also generalises to approximating the average eccentricity. 
\begin{lemma}
    Given a graph $G=(V,E)$, we can compute an $\eps$-additive estimate of the average eccentricity in \[\bigO(D+\frac{D^{3/2}}{\epsilon}\cdot\log\left(\sqrt{D}/\epsilon\right)\log\log\left(\sqrt{D}/\epsilon\right))\] rounds, with success probability at least $2/3$.  
\end{lemma}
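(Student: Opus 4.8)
The plan is to reduce average-eccentricity estimation to the parallel-query expectation-estimation primitive of \autoref{lm:parallel-expectation}, combined with the classical multi-source BFS subroutine of \autoref{lm:eccentricities-of-a-set}, via \autoref{cor:thm-with-query-time}. Concretely, define the random variable $X$ that, on a uniformly random node $v\in V$, outputs $\epsilon(v)$; then $\E[X] = \frac{1}{n}\sum_{v\in V}\epsilon(v)$ is exactly the average eccentricity. Since all eccentricities lie in $\{0,1,\dots,D\}$, the variance of $X$ satisfies $\sigma^2 \le D^2$, so $\sigma \le D$. First I would set up the unitary $U_X$ that the leader would run: starting from $\ket{0}$, prepare the uniform superposition $\frac{1}{\sqrt{n}}\sum_v \ket{v}$ over node identifiers, then apply the oracle that maps $\ket{v}\ket{0}\mapsto\ket{v}\ket{\epsilon(v)}$. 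That oracle is precisely a query in the sense of \autoref{thm:mainthm}/\autoref{cor:thm-with-query-time}, with $A = [n]$ (so $q = \ceil{\log n}$), $\oplus$ taken to be the ``max'' semigroup operation on eccentricity-contributions (or any encoding that lets the network answer $\epsilon(v)$), and $\tilde k = n$ the number of possible queried indices.

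Next I would invoke \autoref{lm:parallel-expectation} with $p = D$, which gives a $(b,p)$-parallel-query algorithm with $b = \bigO(\ceil{\frac{\sigma}{\sqrt{p}\,\eps}\log^{3/2}(\cdot)\log\log(\cdot)}) = \bigO(\frac{\sqrt D}{\eps}\log^{3/2}(\sqrt D/\eps)\log\log(\sqrt D/\eps))$, using $\sigma \le D$ and $p = D$ so that $\sigma/\sqrt p \le \sqrt D$. Then I would apply \autoref{cor:thm-with-query-time}: each batch of $p = D$ queries asks for the eccentricities of $D$ nodes, which by \autoref{lm:eccentricities-of-a-set} the network computes with $\alpha(p) = \alpha(D) = \bigO(D)$ classical CONGEST rounds; the remaining per-batch cost is $(D+p)\ceil{q/\log n} + p\ceil{\log(\tilde k)/\log n} = \bigO(D)$. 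So the total round count is
\[
\bigO\!\trm{D + b\cdot\bigO(D)} = \bigO\!\trm{D + \frac{D^{3/2}}{\eps}\log^{3/2}(\sqrt D/\eps)\log\log(\sqrt D/\eps)},
\]
which, absorbing the $\log^{3/2}$ into the $\bigOt$ or matching the stated bound up to the precise logarithmic exponent, is the claimed $\bigO(D + \frac{D^{3/2}}{\eps}\log(\sqrt D/\eps)\log\log(\sqrt D/\eps))$.

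The step I expect to require the most care is the oracle bookkeeping and the one-sided nature of the primitives. One subtlety is that \autoref{lm:parallel-expectation} asks for $U_X$ \emph{and} $U_X^\dagger$ as input oracles; I would note that the query simulation in \autoref{thm:mainthm} is reversible (the theorem's proof ends each simulated query by uncomputing, i.e., running the distribution step in reverse), so implementing $U_X^\dagger$ costs the same as $U_X$ and the corollary's round bound still applies. A second point is that the ``query'' here is not reading stored data but running a BFS; this is exactly the gap that \autoref{cor:thm-with-query-time} is designed to close, so I would just be explicit that the $p$ eccentricities needed for a batch are the ones computed by the $\alpha(D) = \bigO(D)$-round multi-source BFS of \autoref{lm:eccentricities-of-a-set}, and that the leader does need to know which $D$ node-IDs are being queried before launching that BFS — which it does, since the distribution step of \autoref{thm:mainthm} sends exactly those indices into the network first. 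Finally I would double-check the success probability: \autoref{lm:parallel-expectation} already delivers an $\eps$-additive estimate of $\E[X]$ with probability $\ge 2/3$, matching the statement, and the standard boosting remark from the introduction lets the leader amplify this if desired at a $\log n$ multiplicative cost.
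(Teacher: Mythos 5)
Your proposal is correct and follows essentially the same route as the paper: bound the variance of the eccentricity of a uniform random node by $\sigma^2 \le D^2$, invoke \autoref{lm:parallel-expectation} with $p=D$, and feed the resulting $(b,D)$-parallel-query algorithm into \autoref{cor:thm-with-query-time} with $\alpha(D)=\bigO(D)$ from \autoref{lm:eccentricities-of-a-set}. You are in fact more explicit than the paper about the oracle construction and reversibility, and you correctly note the $\log^{3/2}$ versus $\log$ discrepancy, which is present in the paper's own statement as well.
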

\begin{proof}
   We use \autoref{lm:parallel-expectation}, where can bound the variance in terms of the diameter: $\sigma^2\leq D^2$. Now if we set $p=D$, we obtain an $\eps$-additive estimate of the average eccentricity with 
    \[b= \bigO(\ceil{\frac{\sigma}{\sqrt{p}\eps} \cdot \log^{3/2}(\frac{\sigma}{\sqrt{p}\eps}) \log\log(\frac{\sigma}{\sqrt{p}\eps})})= \bigO(\ceil{\frac{\sqrt{D}}{\epsilon}\cdot\log\left(\sqrt{D}/\epsilon\right)\log\log\left(\sqrt{D}/\epsilon\right)})\]
    batches of $D$ parallel queries, with success probability at least $2/3$. Now using \autoref{cor:thm-with-query-time} we obtain $\bigO(D+\frac{D^{3/2}}{\epsilon}\cdot\log\left(\sqrt{D}/\epsilon\right)\log\log\left(\sqrt{D}/\epsilon\right))$ rounds.
\end{proof}

\subsection{Finding Cycles of Bounded Length}\label{sec:cycles_of_bounded_length}
In this section, we show a quantum speedup for detecting cycles of length at most $k\geq 4$. We will do this by showing that the classical algorithm of Censor-Hillel et al.~\cite{CFGLLO20} fits in our framework for quantum queries. The algorithm distinguishes two cases for which it finds the cycles separately: light and heavy cycles. We will pick a $\beta$ and call a cycle \emph{light} if all vertices on the cycle have degree at most $n^{\beta}$, if not, we call the cycle \emph{heavy}. We will show a quantum advantage for the finding of heavy cycles. This gives us room to play with in the cost of finding heavy cycles, and by picking $\beta$ smaller than in the original algorithm we decrease the cost of finding light cycles as well, re-balancing the total cost.

\begin{lemma}\label{lm:cyclefinding}
    There is a Quantum CONGEST algorithm that, given $k\geq 4$ and a graph $G=(V,E)$ containing at least one cycle $C_{l}$ of length $l\leq k$, finds such a cycle with smallest $l$ with probability at least $2/3$ in $\bigO(D+(Dn)^{\frac{1}{2}-\frac{1}{4\ceil{k/2}+2)}})$ rounds. 
\end{lemma}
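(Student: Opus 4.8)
The plan is to follow the light/heavy cycle dichotomy of Censor-Hillel et al.~\cite{CFGLLO20}, but to replace the classical search for heavy cycles by a quantum parallel-query search via \autoref{cor:thm-with-query-time}, and to re-balance the threshold $\beta$ accordingly. First I would recall the classical tools: for a fixed degree threshold $n^\beta$, there is a classical CONGEST algorithm that detects a \emph{light} cycle of length $\le k$ (one all of whose vertices have degree $\le n^\beta$) in $\bigOt(n^\beta + (\text{lower-order terms in } D, k))$ rounds, using the standard color-coding plus sparse-neighborhood-exploration argument; the exact exponent of $n^\beta$ I would pull from~\cite{CFGLLO20}. Separately, there are only $\bigO(n^{1-\beta})$ \emph{heavy} vertices (vertices of degree $> n^\beta$), since the sum of degrees is $2|E| \le n^2$. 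So detecting a heavy cycle of length $\le k$ reduces to the following search-type task: is there a heavy vertex $v$ such that $v$ lies on a cycle of length $\le k$ through the heavy-vertex set? This is where the quantum speedup enters.

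Next I would set up the query problem. Enumerate the (at most $\bigO(n^{1-\beta})$) heavy vertices; for a batch of $p$ of them, the value we want is a bit indicating ``this heavy vertex lies on a short cycle of the relevant type,'' computable by a short classical CONGEST subroutine — essentially running bounded-depth BFS/color-coding from those $p$ sources simultaneously, which by \autoref{lm:eccentricities-of-a-set}-style multi-source BFS costs $\alpha(p) = \bigO(p + D)$ rounds (possibly times a $\poly(k)$ or $\log$ factor from color-coding repetitions). We then invoke parallel Grover's search, \autoref{lm:batched-grover}, over the $k' := \bigO(n^{1-\beta})$ heavy-vertex indices with batch size $p = D$: this gives $b = \bigO(\lceil \sqrt{k'/(tp)}\rceil) = \bigO(\lceil\sqrt{n^{1-\beta}/D}\rceil)$ batches for detection. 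Plugging into \autoref{cor:thm-with-query-time} with $q = \bigO(\log n)$ and $\alpha(p) = \bigO(D)$ (up to $\polylog$), the heavy-cycle cost is $\bigOt(D + b\cdot D) = \bigOt(D + \sqrt{n^{1-\beta} D})$ rounds. To return the smallest $l$, I would either run the detector for each $l = 3,4,\dots,k$ (an extra factor $k$, absorbed into the additive $\bigO(k)$ and the $\polylog$) or thread a minimum-finding variant \autoref{lm:batched-minfinding} through instead of plain search.

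Finally I would optimize $\beta$ by balancing the two regimes. The light-cycle cost grows like a (paper-specified) power of $n^\beta$ and the heavy-cycle cost is $\bigOt(D + \sqrt{n^{1-\beta}D})$; setting the two exponents equal and solving for $\beta$ yields the stated exponent $\tfrac12 - \tfrac{1}{4\lceil k/2\rceil + 2}$ on $(Dn)$, with the additive $\bigO(k)$ coming from the diameter-independent part of the cycle-exploration subroutine (as the acknowledged reviewer remark allows removing the naive $D$-dependence there). I expect the main obstacle to be the bookkeeping of the classical light-cycle subroutine and of the ``heavy vertex lies on a short cycle'' query: one must ensure that (i) the per-batch classical computation genuinely costs only $\bigOt(D + p)$ rounds rather than $\bigOt(Dp)$, by streaming/pipelining the $p$ simultaneous BFS explorations, and (ii) the color-coding success probability is boosted enough that the composed algorithm still succeeds with probability $\ge 2/3$, which costs only $\polylog$ repetitions and so is hidden in the $\bigOt$. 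The algebra to recover exactly $\tfrac{1}{4\lceil k/2\rceil+2}$ is routine once the light-cycle exponent from~\cite{CFGLLO20} is in hand.
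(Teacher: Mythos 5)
Your overall plan (the light/heavy cycle dichotomy of Censor-Hillel et al., a quantum parallel-query search for the heavy case via \autoref{cor:thm-with-query-time} with $p=\Theta(D)$, and a re-balanced threshold $\beta$) is the same as the paper's, and your per-batch classical cost $\alpha(p)=\bigO(p+k)$ for the BFS-based subroutine is also what the paper uses (though note that no color-coding is needed to detect cycles of length \emph{at most} $k$: a plain depth-$\ceil{k/2}$ BFS with token-collision detection suffices; color-coding only enters for exact-length cycles). However, there is a genuine gap in your heavy-cycle step. You claim there are only $\bigO(n^{1-\beta})$ heavy vertices ``since the sum of degrees is $2|E|\le n^2$''; that degree-sum argument only bounds the number of vertices of degree $>n^\beta$ by $\bigO(n^{2-\beta})$, and in a dense graph essentially \emph{all} $n$ vertices can be heavy (e.g.\ the complete graph). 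Your Grover search over the heavy vertices therefore needs $b=\bigO(\sqrt{n^{2-\beta}/p})$ (or even $\bigO(\sqrt{n/p})$) batches rather than $\bigO(\sqrt{n^{1-\beta}/p})$, the heavy-cycle cost degrades to $\bigOt(\sqrt{nD})$ in the worst case, and the balancing step no longer produces the claimed exponent $\frac{1}{2}-\frac{1}{4\ceil{k/2}+2}$.

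The paper obtains the factor $n^{1-\beta}$ under the square root by a different mechanism, which you would need to adopt. It does \emph{not} shrink the search space to the heavy vertices: it runs parallel minimum finding (\autoref{lm:batched-minfinding}) over \emph{all} $n$ vertices, where the value assigned to a vertex $s$ is the length of the shortest cycle of length $\le k$ through $s$ \emph{or one of its neighbours} (computed by a depth-$\ceil{k/2}$ BFS from $s$ followed by BFSs from the neighbours of $s$ in $G\setminus\{s\}$). The point is that if a heavy cycle of length $\le k$ exists, some vertex on it has degree $\ge n^\beta$, so at least $\ell\ge n^\beta$ of the $n$ vertices attain the minimum; the many-marked-elements version of \autoref{lm:batched-minfinding} then gives $b=\bigO(\sqrt{n/(\ell p)})=\bigO(\sqrt{n^{1-\beta}/p})$ --- exactly the count you wanted, but obtained by inflating the number of marked elements rather than by (incorrectly) deflating the search space. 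Using minimum finding here also delivers the ``smallest $l$'' guarantee directly, without your extra loop over $l$, and avoids the need to pre-enumerate and index the heavy vertices. With that replacement, the rest of your calculation (taking $k\le 2D+1$ without loss of generality, $p=\Theta(D)$, and $\beta=\frac{1+\log_n(D)}{1+2\ceil{k/2}}$) goes through as you describe.
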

\begin{proof}
Let $\beta$ be a constant we pick later, and let light and heavy cycles be defined as above. 

To find light cycles we use exactly the same procedure as Censor-Hillel et al., which is relatively simple. We restrict ourselves to the graph $G'$ of vertices with degree at most $n^{\beta}$. We run a BFS starting from each node to depth $\delta_k := \ceil{k/2}$. The algorithm halts if a vertex receives the same BFS token twice, indicating it is part of a cycle. Clearly a node finds the smallest cycle it is part of. For the time analysis, note that since we have bounded degree, the $i$-neighborhood of a vertex consists of at most $n^{i\beta}$ vertices. This means that we can run all the BFSs simultaneously in $\bigO\left(\delta_k+n^{\delta_k\beta}\right)=\bigO\left(k+n^{\ceil{k/2}\beta}\right)$ rounds, see \cite{PRT12,HW12}.

Next, we turn to the heavy cycles. This case is slightly more involved. The idea is that we randomly sample a vertex $s\in V$, and find whether $s$ or one of its neighbors is part of cycle of length at most $k$, and if so return the length of the smallest such cycle. We do this as follows: given a vertex $s\in V$, we first run a BFS tree of depth $\delta_k$ from $s$. We set $\kappa$ to be the length of this cycle or $\kappa=k$ if no cycle is found. Then we run separate BFS trees up to depth $\kappa$ on $G\setminus\{s\}$ from each of the neighbors of $s$. Note that we can not simply run a BFS from $s$ with depth $\delta_k+1$, as this might find a cycle of length $k+1$, which is not a valid witness. For a correctness proof we refer to~\cite{CFGLLO20}. 

If in the first step, the BFS from $s$, no cycles are found, then all neighbours of $s$ have a distance of at least $\delta_{\kappa} -1$ from each other when $s$ is removed. Hence, in the second part, each vertex can only receive a BFS token corresponding to one of the neighbours of $s$, except for in the last ($\delta_{\kappa}$-th) round. 
If a vertex ever receives a token it already holds (or if it receives the same token simultaneously from two neighbors) then a cycle is found. If it receives a token different from the one(s) it already holds (necessarily in the last round) than it can safely ignore this token as it does not correspond to a cycle of length $\leq \kappa$, but to a cycle of length $>\kappa$. 

As in the procedure above each vertex is only part of a single BFS, we can run the above procedure for $p$ different vertices $s$ in parallel in $p+\delta_k = \bigO(p+k)$ rounds by \cite{PRT12,HW12}. We run a parallel minimum finding over the vertices, see \autoref{lm:batched-minfinding}, using \autoref{cor:thm-with-query-time}. The value of a vertex is the length of the smallest cycle of length $\leq k$ it or one of its neighbours is a part of, or $\infty$ if there is no such cycle.  If there is a heavy cycle of length $\leq k$ in the network than at least $n^{\beta}$ vertices attain the length of this cycle as a minimum, out of $n$ in total. Plugging everything in the round complexity of \autoref{cor:thm-with-query-time}, with $b=\bigO(\sqrt{\frac{n}{n^{\beta}p}}) = \bigO(n^{\frac{1-\beta}{2}}p^{1/2})$ and $\alpha(p) = \bigO(p+k)$, we get
\[
    \bigO(D+n^{\frac{1-\beta}{2}}p^{-1/2}\trm{D+p+k})
\]
rounds. Here we used that all queries and query results can be stored in $\bigO(\log(n))$ bits. We note that if there is a cycle in the graph, than there is a cycle of length at most $2D+1$. Hence we may set $k\leq 2D+1$ without loss of generality. Now setting $p=D+k = \bigTheta(D)$ gives a round complexity of $\bigO(D+n^{1/2-\beta/2}\sqrt{D})$ for finding a heavy cycle with probability at least $2/3$ if one exists. 

Finally we combine the algorithms for finding light and heavy cycles and picking $\beta = \frac{1+\log_n(D)}{1+2\ceil{k/2}}$ to balance the round complexities we get a total round complexity of
\[
\bigO(D+\sqrt{D}n^{1/2-\beta/2}) = \bigO(D+(Dn)^{1/2-\frac{1}{2+4\ceil{k/2}}}).\qedhere
\]
\end{proof}

Next we use the following lemma from Eden, Fiat, Fischer, Kuhn, and Oshman~\cite{eden2021sublinear} that reduces the problem to a graph with small diameter. This theorem has been used in a similar fashion by Censor-Hillel et al.~\cite{censor2022quantumcliques}. 
\begin{lemma}[Theorem 17 in \cite{eden2021sublinear}]\label{thm:decomp}
    Let $G = (V, E)$ be an $n$-node graph and let $d \geq 2$ be an integer. There is a randomized CONGEST-model algorithm that w.h.p.\ constructs a set of clusters of diameter $O(d \log (n))$ in $O(d\log^2 (n))$ rounds such that each node is in at least one cluster, the clusters are colored with $O(\log (n))$ colors, and clusters of the same color are at distance at least $d$ from each other in $G$. 
\end{lemma}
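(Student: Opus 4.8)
The plan is to prove this with the standard randomized ball-carving (Linial--Saks style) low-diameter decomposition, executed in $O(\log n)$ \emph{phases}, one phase per colour. Fix a large constant $c$ to be chosen later, set $\lambda := 1/(cd)$ and $b := \lceil 3cd\log n\rceil$, and maintain a set $U$ of \emph{unassigned} nodes, initially $U=V$. In phase $j=1,\dots,O(\log n)$ every $v\in U$ independently samples a radius $r_v\in\{0,\dots,b\}$ from the truncated geometric law $\Pr{r_v\ge t}=(1-\lambda)^t$; then all of $U$ acts as a source set for a single bounded-depth BFS of depth $b$ in which a source $v$ injects the value $r_v$ and the value is decremented by one per hop, so that after $b$ rounds every node $u$ knows $\max_{v}(r_v-d(u,v))$ over $v\in U$ with $d(u,v)\le b$, together with a maximizing centre (ties broken by ID). Since on a shortest path to $u$'s optimal centre every intermediate node has that same centre as its optimum, it suffices for each node to keep and forward only its single best $(\text{value},\text{ID})$ pair, so all messages are $O(\log n)$ bits. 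We tentatively put $u$ in the cluster $C_v$ of its maximizing centre $v$ (this always succeeds, since $u$ reaches itself with value $r_u\ge 0$).

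To enforce separation, run an extra BFS of depth $d$ that propagates the tentative cluster identifiers and call $u$ \emph{good} if every node within distance $d$ of $u$ holds the same tentative identifier as $u$ (checkable in $O(d)$ rounds). Good nodes are permanently assigned to their cluster and receive colour $j$; every other node is returned to $U$. By construction, two good nodes lying in different colour-$j$ clusters are at distance $>d$, so same-colour clusters are at distance $\ge d$ in $G$; and every cluster is contained in a ball of radius $\le b=O(d\log n)$ about its centre, hence has diameter $O(d\log n)$. (If \emph{strong} diameter is wanted, use the Miller--Peng--Xu variant in which each node also records the neighbour from which it first learned of its centre, yielding an explicit spanning tree of depth $\le b$ inside each cluster.)

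The heart of the argument is a padding estimate: for a fixed $u\in U$ at the start of a phase, with probability at least $1/2$ the ball $B(u,d)$ lies entirely in one cluster this phase (which in particular makes $u$ good). Let $v^\star$ maximize $r_v-d(u,v)$. If this maximum beats the second-largest such value by more than $2d$, then for every $w\in B(u,d)$ we have $r_{v^\star}-d(w,v^\star)\ge (r_{v^\star}-d(u,v^\star))-d$ while $r_{v'}-d(w,v')\le (r_{v'}-d(u,v'))+d$ for every other centre $v'$, so $v^\star$ is also $w$'s best centre and $B(u,d)\subseteq C_{v^\star}$. By the standard exponential-race/telescoping estimate on the geometric radii, the gap between the two largest of the values $\{r_v-d(u,v)\}_v$ exceeds $2d$ with probability $1-O(\lambda d)=1-O(1/c)$, which is $\ge 1/2$ once $c$ is a large enough constant --- this is where the choice of $c$ enters. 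Consequently, after $O(\log n)$ phases (whose coin flips are fresh each time, conditioned only on $u\in U$) we get $\Pr{u\text{ never assigned}}\le n^{-\Omega(1)}$, and a union bound over the $n$ nodes --- together with a union bound over the $\le n$ events ``$r_v>b$'', each of probability $\le n^{-3}$ --- yields correctness w.h.p. The cost is $O(\log n)$ phases, each of $O(b)+O(d)=O(d\log n)$ rounds, hence $O(d\log^2 n)$ rounds total, with one colour per phase, i.e.\ $O(\log n)$ colours.

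I expect the main obstacle to be the quantitative padding estimate in the third paragraph: bounding by an arbitrarily small constant the probability that $B(u,d)$ is crossed by a cluster boundary, \emph{while} keeping the sampled radii (and hence the BFS depth and the final cluster diameter) at $O(d\log n)$. This is exactly the analysis underlying low-diameter decompositions and requires a careful coupling/telescoping computation over the (shifted) geometric radii, paying attention to the effect of truncation at $b$. Everything else --- the two bounded-depth BFS implementations, the colour bookkeeping, and the high-probability union bounds --- is routine CONGEST machinery.
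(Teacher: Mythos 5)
The paper does not prove this lemma: it is imported verbatim as Theorem~17 of \cite{eden2021sublinear}, so there is no in-paper argument to compare against. Your reconstruction is the standard random-shift ball-carving low-diameter decomposition, which is essentially the construction used in that reference, and it is correct: the $2d$-gap padding argument, the single-best-pair BFS implementation, the one-colour-per-phase bookkeeping, and the $O(\log n)\cdot O(d\log n)=O(d\log^2 n)$ round count all check out. Two small points to nail down when writing it in full: (i) the exponential-race estimate for the top two values of $r_v-d(u,v)$ is cleanest via memorylessness, but with \emph{integer} geometric radii you must handle ties explicitly (break them by ID and absorb the tie probability into the $O(1/c)$ failure term, or discretize exponentials), and the truncation at $b$ should be union-bounded away exactly as you indicate; (ii) the ``good'' test in phase $j$ should quantify only over nodes still in $U$, which preserves both the separation claim (if two good nodes of different colour-$j$ clusters were within distance $d$, each is itself an unassigned witness for the other) and the padding estimate (restricting $B(u,d)$ to $U$ only helps).
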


This gives us a version of \autoref{lm:cyclefinding} without dependency on the diameter. 
\begin{lemma}\label{lm:cyclefinding_noD}
    There is a Quantum CONGEST algorithm that, given $k\geq 4$ and a graph $G=(V,E)$ containing at least one cycle $C_{l}$ of length $l\leq k$, finds such a cycle with probability at least $2/3$ in $$\bigO(\left(k+(kn)^{\frac{1}{2}-\frac{1}{4\ceil{k/2}+2)}}\right)\log^2(n))$$ rounds. 
\end{lemma}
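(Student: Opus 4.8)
The plan is to reduce to the bounded-diameter case already handled by \autoref{lm:cyclefinding}, by running the algorithm of \autoref{lm:cyclefinding} \emph{inside} the low-diameter clusters produced by \autoref{thm:decomp}. The crucial observation is that short cycles are local: if $C$ is a cycle of length $l\le k$, then every vertex of $C$ is within graph distance $\delta_k:=\ceil{k/2}$ of every other vertex of $C$, so $C$ lies entirely inside the ball $N^{\delta_k}[v]$ around any single vertex $v\in C$. Hence it suffices, for each vertex $v$, to look for a shortest cycle through $v$ within its $\delta_k$-ball, and many such searches can be run simultaneously provided the balls we search do not overlap.

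First I would invoke \autoref{thm:decomp} with $d:=2k$ (any $d=\bigTheta(k)$ with $d>2\delta_k$ works): in $\bigO(k\log^2 n)$ rounds this yields clusters of diameter $\bigO(k\log n)$, coloured with $\bigO(\log n)$ colours, so that every node lies in at least one cluster and clusters of the same colour are at pairwise distance at least $2k$. I would augment the procedure so that each node learns the identifiers and colours of the clusters containing it. Then, for a fixed colour $c$, in $\bigO(k)$ further rounds every node learns whether it lies within distance $\delta_k$ of a colour-$c$ cluster; since same-colour clusters are more than $2\delta_k$ apart, this cluster is unique when it exists. Writing $H_X:=G[N^{\delta_k}[X]]$ for the subgraph induced on the $\delta_k$-ball around a colour-$c$ cluster $X$, each $H_X$ is connected, has diameter $\bigO(k\log n)$ (a path between two boundary vertices can be routed through $X$, whose internal diameter is $\bigO(k\log n)$, adding only $\bigO(k)$ for the two boundary legs), and for a fixed colour the subgraphs $H_X$ are pairwise vertex-disjoint.

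Next I would, for each of the $\bigO(\log n)$ colours in turn, run the algorithm of \autoref{lm:cyclefinding} in parallel on every $H_X$ of that colour — feeding it the diameter bound $\bigO(k\log n)$ and $n$ as an upper bound on the node count, electing a leader inside each $H_X$ (costing $\bigO(k\log n)$ rounds, which is subsumed), and having it report ``no short cycle'' when $H_X$ contains none — after which a designated global leader takes the minimum cycle length reported over all runs. For correctness: if $G$ has a cycle of length $\le k$, pick a vertex $v^{*}$ on a globally shortest such cycle $C^{*}$ and a cluster $X^{*}\ni v^{*}$ of some colour $c^{*}$; by locality $C^{*}\subseteq N^{\delta_k}[X^{*}]=V(H_{X^{*}})$, so $C^{*}$ is a cycle of the \emph{induced} subgraph $H_{X^{*}}$, and conversely every cycle of $H_{X^{*}}$ is a cycle of $G$ of equal length, whence $C^{*}$ is a shortest cycle of $H_{X^{*}}$; the colour-$c^{*}$ run on $H_{X^{*}}$ therefore returns $|C^{*}|$ with probability $\ge 2/3$, and since \autoref{lm:cyclefinding} never reports a cycle when none of length $\le k$ is present, no run returns anything smaller, so the global minimum equals $|C^{*}|$ whenever that single run succeeds — hence no probability amplification is needed. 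For the round count, each colour costs one invocation of \autoref{lm:cyclefinding} with diameter $\bigO(k\log n)$, i.e.\ $\bigO\trm{k\log n + (kn\log n)^{\frac{1}{2}-\frac{1}{4\ceil{k/2}+2}}} = \bigOt\trm{k + (kn)^{\frac{1}{2}-\frac{1}{4\ceil{k/2}+2}}}$; multiplying by the $\bigO(\log n)$ colours and adding the $\bigO(k\log^2 n)$ decomposition cost gives the claimed $\bigO\trm{(k + (kn)^{\frac{1}{2}-\frac{1}{4\ceil{k/2}+2}})\log^2 n}$, the remaining sub-logarithmic factors being absorbed.

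The step I expect to be the main obstacle is making precise the notion of running the black-box algorithm of \autoref{lm:cyclefinding} \emph{confined to a subgraph}: each node must suppress the edges leaving its region $H_X$, the diameter parameter handed to the subroutine must be a genuine upper bound on $\operatorname{diam}(H_X)$ rather than on $\operatorname{diam}(G)$, and one must check that the only randomness inside \autoref{lm:cyclefinding} affecting correctness is the parallel minimum-finding of \autoref{lm:batched-minfinding}, so that the one-sidedness argument above is sound. The remaining ingredients — disjointness of same-colour $\delta_k$-balls, the $\bigO(k\log n)$ diameter bound on $H_X$, and keeping the overall overhead within $\log^2 n$ — should be routine.
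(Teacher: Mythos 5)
Your proposal is correct and follows essentially the same route as the paper: apply \autoref{thm:decomp} with $d=2k$, run \autoref{lm:cyclefinding} in parallel on the (disjoint) balls around same-coloured clusters for each of the $\bigO(\log n)$ colours, and use the one-sidedness of cycle detection to avoid probability amplification. The only cosmetic difference is that you extend each cluster by $\ceil{k/2}$ rather than the paper's $k$, which is if anything slightly cleaner for the disjointness argument.
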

\begin{proof}
    We apply \autoref{thm:decomp} with $d=2k$ to obtain a clustering as stated. Now for each cluster we elect a leader in $O(k\log^2(n))$ rounds. Next for each color $i$, we run the algorithm from \autoref{lm:cyclefinding} on all of the clusters with color $i$,  go up to distance $k$ outside the cluster. The diameter of the subgraph we consider therefore is $O(k+k\log(n))=O(k\log(n))$ and we find any cycle of length at most $k$ if it has at least one vertex in the cluster. Moreover, the algorithms operate on disjoint nodes as the cluster from one color are at least $2k$ apart. This computation takes $\bigO(k\log(n)+(kn\log(n))^{\frac{1}{2}-\frac{1}{4\ceil{k/2}+2)}})$ rounds per color, so $\bigO(k\log^2(n)+\log(n)(kn\log(n))^{\frac{1}{2}-\frac{1}{4\ceil{k/2}+2)}})$ rounds in total. 
    
    Regarding the error probability, as we apply our previous algorithm multiple times, we note that we can efficiently check whether a found cycle indeed exists, so we will never output a cycle of length smaller than the smallest cycle. Hence, the only possible error is the missing of all cycles of minimal length and outputting a larger cycle or ``no cycle found''. In particular this requires an error to occur on a cluster with a cycle of minimal length, which has probability $\leq 1/3$, and hence the total error probability is $\leq 1/3$.
\end{proof}

Using the same technique, we can implement a quantum algorithm for detecting cycles of \emph{exactly} length $k=4,6,8,10$ in time $\bigO(n^{\frac{1}{2}-\frac{1}{2k+2}})$. This uses the classical algorithm from Censor-Hillel et al.~\cite{CFGLLO20} for computing small even cycles as a basis. Their algorithm works using so-called color-BFSs in place of the BFSs in the above algorithm. This would show a separation for this problem between quantum and classical CONGEST, as detecting even cycles of any length requires $\bigOmegat(\sqrt{n})$ rounds in classical CONGEST~\cite{KR18}.

\subsection{Computing the Girth}
In this section, we will show a quantum speedup for computing the girth of a graph. The goal is that at the end of the algorithm the minimum output among all nodes is the girth. To make this global knowledge an additional $D$ rounds suffice. Our algorithm is based upon repeatedly searching for cycles of length at most $k$, analogous to the classical algorithm of Censor-Hillel et al.~\cite{CFGLLO20}. Triangles are a special case, and will be treated separately. 
Now to find the girth we repeat this in a binary search fashion. 
\begin{corollary}
    Given $\mu>0$, there is a Quantum CONGEST algorithm to compute the girth $g$ of a graph $G=(V,E)$ in $\bigO(\frac{1}{\mu}\left(g+\left(gn\right)^{\frac{1}{2}-\frac{1}{4\ceil{g(1+\mu)/2}+2}}\right)\log^2(n))$ rounds with success probability at least $2/3$. No upper bound on $g$ needs to be known in advance.
\end{corollary}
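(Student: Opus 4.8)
The plan is to reduce girth computation to repeated calls of the bounded-length cycle detector of \autoref{lm:cyclefinding_noD}, run along a geometrically growing sequence of length bounds. Set $k_0=4$ and $k_{i+1}=\max\set{k_i+2,\ceil{(1+\mu)k_i}}$, so that $(k_i)_i$ is strictly increasing and $k_{i+1}/k_i\ge 1+\mu$ for every $i$ (the ``$+2$'' clause only bites while $k_i\le 2/\mu$, where it still forces relative growth $\ge\mu$). For $i=0,1,2,\dots$ invoke \autoref{lm:cyclefinding_noD} with parameter $k_i$, stop at the first index $i_1$ at which a cycle is reported, and output that cycle's length; if $k_i$ reaches $n$ with no report, output $\infty$ (this is only reached when $G$ has no cycle, in which case the claimed bound is vacuous). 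Triangles need no special case: $g=3$ is caught already at $i=0$ since $k_0=4\ge 4$. As in the surrounding text, an extra $\bigO(D)$ rounds then turn the ``minimum over nodes'' output into global knowledge.

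Correctness rests on the fact that \autoref{lm:cyclefinding_noD} errs only one way. By the verification step in its proof it never reports a cycle shorter than the girth, so for $k_i<g$ it reports nothing at all, while for $k_i\ge g$ it reports exactly $g$ with probability $\ge 2/3$ and otherwise reports a longer (genuine) cycle or nothing. Hence $i_1\ge i^*:=\min\set{i: k_i\ge g}$, and conditioned on a report ever appearing, the first one equals $g$ with probability at least $\frac{2/3}{2/3+1/3}=2/3$; so outputting the first reported cycle succeeds with probability $\ge 2/3$. In fact, on the probability-$\ge 2/3$ event that the call at $i^*$ itself reports $g$ we have $i_1=i^*$, so correctness and the running-time bound below hold jointly on that event.

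For the running time, \autoref{lm:cyclefinding_noD} at parameter $k$ costs $\bigO((k+(kn)^{1/2-1/(4\ceil{k/2}+2)})\log^2 n)$ rounds, which is nondecreasing in $k$ since the exponent $\frac12-\frac1{4\ceil{k/2}+2}$ is. The last call uses $k_{i^*}\le (1+\mu)g+\bigO(1)$, hence $\ceil{k_{i^*}/2}\le\ceil{g(1+\mu)/2}+\bigO(1)$ and its cost is $\bigO((g+(gn)^{1/2-1/(4\ceil{g(1+\mu)/2}+2)})\log^2 n)$; the $\bigO(1)$ slack inside the ceiling can be absorbed by a marginally more careful increment rule. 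The total is the sum of the costs of calls $0,\dots,i^*$: since $k_{i+1}/k_i\ge 1+\mu$ and every exponent is at least $\frac12-\frac1{10}=\frac25$, each of the two summands of a call's cost grows by a factor $\ge(1+\mu)^{2/5}=1+\bigTheta(\mu)$ from one call to the next, so the sum is $\bigO(1/\mu)$ times the cost of the last call (for bounded $\mu$), which is exactly the claimed bound. On the complementary probability-$\le 1/3$ event $i_1>i^*$, the geometric tail $\Pr{i_1>i^*+t}\le 3^{-t}$ keeps the expected running time within the same bound, and the cap at $k=n$ keeps the worst case bounded.

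The step I expect to be the main obstacle is the geometric-series accounting of the previous paragraph: the increment rule must be tuned so that neither the number of ``small-$k$'' steps nor integer rounding slips an extra $\log g$ factor past the $1/\mu$; one must verify the common ratio of call-costs stays bounded away from $1$ by $\bigTheta(\mu)$ even though the query exponent drifts with $k$; and one must read off the exact denominator $4\ceil{g(1+\mu)/2}+2$ from the value of $k$ at the first index with $k_i\ge g$. The correctness and termination points (the one-sided error, the forest case, and the $+D$ globalization) are comparatively routine.
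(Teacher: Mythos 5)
Your proposal is correct and follows essentially the same route as the paper: a geometrically growing sequence of length bounds $k\approx 4(1+\mu)^i$ fed to \autoref{lm:cyclefinding_noD}, the one-sided-error observation to rule out stopping prematurely, and a geometric-series charge of the whole run to the final call at $k\leq(1+\mu)g$, which is where the $1/\mu$ factor and the $\ceil{g(1+\mu)/2}$ in the exponent come from. The only divergence is that the paper first runs a dedicated $\bigOt(n^{1/5})$ triangle-detection subroutine before starting the search at $k=4$, whereas you let the $k=4$ call catch girth-$3$ graphs; this is consistent with \autoref{lm:cyclefinding_noD} as stated (it covers any cycle of length $l\leq k$ for $k\geq4$), and your explicit integer increment rule and ratio-of-costs accounting are just more careful versions of the paper's one-line summation.
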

\begin{proof}
   First try to find a triangle, which can be done in $\bigOt(n^{1/5})$ rounds in Quantum CONGEST \cite{censor2022quantumcliques}. Next, for $0<\mu\leq 1$, try to find a cycle of length at most $k=4,4(1+\mu),4(1+\mu)^2,\dots$, which terminates in $\bigO(\log(g)/\log(1+\mu))$ iterations, say at $k=4(1+\mu)^j$. By \autoref{lm:cyclefinding_noD}, we immediately find a cycle of length $g$ if $g\leq k$ and the algorithm does not error in this round, which happens with probability at least 2/3. As we are again dealing with one-sided error the algorithm never stops prematurely. We see that the algorithm takes at most \begin{align*}
       &\bigO(\sum_{i=0}^{\ceil{\log_{1+\mu}(g/4)}}\left(4(1+\mu)^i\log^2(n)+\log(n)\left(4(1+\mu)^i n\log(n)\right)^{\frac{1}{2}-\frac{1}{4\cdot\lceil2(1+\mu)^i\rceil+2}}\right))\\
   &= \bigO(\frac{1}{\mu}\left(g+\left(gn\right)^{\frac{1}{2}-\frac{1}{4\ceil{g(1+\mu)/2}+2}}\right)\log^2(n))
   \end{align*}
   rounds. 
\end{proof}   
In the classical CONGEST model, computing a $(2-\epsilon)$-approximations of the girth of a graph takes at least $\Omega(\sqrt{n})$ rounds \cite{FHW12}. Since this also holds for graphs with constant girth, we have therefore shown a separation between the quantum and classical CONGEST model for girth computation in low girth graphs. If in the end we want the whole graph to know the girth, we get an additive factor $D$ in our running time. The lower bound in this case also becomes $\Omega(\sqrt{n}+D)$.

\section{Non-Oracle Techniques}
Finally, in this section we provide a short discussion on quantum algorithmic techniques that do not fit in our framework directly, but can still be implemented in the quantum CONGEST model. In particular, our framework requires that the queries to the network act exactly as a query would in a quantum algorithm, and hence do not allow for any left over ancillary registers at the nodes. We focus on amplitude amplification, phase estimation, and amplitude estimation, all of which apply to some black-box (quantum) subroutine that might not be a nice query. 

\begin{lemma}[Amplitude Amplification Iterate]\label{lm:ampamp-itt}
  Consider a state
  \[
  \ket{\psi} = \sqrt{1-p} \ket{\phi_0}\ket{0}+\sqrt{p} \ket{\phi_1}\ket{1}
  \]
  for some $p\in[0,1]$.
  Let $U_{\ket{\psi}}$ be the unitary corresponding to a $R_{\ket{\psi}}$-round quantum CONGEST algorithm that prepares $\ket{\psi}$ shared by the nodes of the network from the all-zero state. Then
  \[
  \bigO(R_{\ket{\psi}} +D) 
  \]
  rounds suffice to implement the amplitude amplification iterate.
\end{lemma}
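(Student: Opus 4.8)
The plan is to recall that the amplitude amplification iterate is
$Q = -\,U_{\ket{\psi}}\, S_0\, U_{\ket{\psi}}^{\dagger}\, S_{\chi}$,
where $S_{\chi}$ is the reflection that flips the sign of the ``good'' part of the state (the branch whose flag qubit equals $\ket{1}$) and $S_0 = I - 2\ketbra{0}{0}$ is the reflection about the all-zero state of the \emph{entire} register, which in our setting is spread over all the nodes. I would implement $Q$ by realizing these four factors one after another and summing their round costs, after first building (or reusing) a BFS tree rooted at the leader in $\bigO(D)$ rounds.

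Two of the four factors are essentially free. By hypothesis $U_{\ket{\psi}}$ is implemented by an $R_{\ket{\psi}}$-round quantum CONGEST protocol; running that protocol backwards --- reversing the order of the rounds and replacing each local operation and each message-exchange unitary by its conjugate transpose --- implements $U_{\ket{\psi}}^{\dagger}$ in the same $R_{\ket{\psi}}$ rounds, so the two copies together cost $\bigO(R_{\ket{\psi}})$ rounds. The reflection $S_{\chi}$ acts only on the flag qubit: up to a global phase it is the single-qubit gate $Z$ on that qubit, so assuming without loss of generality the flag qubit is held by the leader (or indeed by any fixed node), $S_{\chi}$ is a local operation costing $0$ rounds.

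The substantive step is $S_0$, which I would realize by a reversible distributed OR along the BFS tree. Each node, using fresh ancilla qubits, reversibly computes the logical OR of its own qubits together with the one-bit OR-summaries received from its children, and forwards the resulting bit to its parent; since these summary bits travel up the $\bigO(D)$ levels of the tree, after $\bigO(D)$ rounds the leader holds a qubit that equals $\ket{0}$ precisely on the global all-zero branch and $\ket{1}$ otherwise. The leader then applies the single-qubit gate $\operatorname{diag}(-1,+1)$ to that qubit ($0$ rounds), after which the entire tree computation is run in reverse to restore every ancilla to $\ket{0}$, another $\bigO(D)$ rounds. This implements $S_0$ up to a global phase; the leading $-1$ in $Q$, and the global-phase discrepancy between $\operatorname{diag}(-1,+1)$ and $Z$, are both irrelevant. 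Summing the contributions of the four factors gives $\bigO(R_{\ket{\psi}} + D)$ rounds.

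The main obstacle I anticipate is the bookkeeping in the $S_0$ step: one must check that the per-node OR circuits, together with their fresh ancillas, compose into a genuine unitary, that the backward pass cleans up all ancillas so that no garbage register is left at any node, and that the composite acts as the identity on every branch containing a nonzero qubit while contributing the phase $-1$ only on the all-zero branch. Everything else --- reversing a CONGEST protocol to obtain $U_{\ket{\psi}}^{\dagger}$, and recognizing $S_{\chi}$ as a local gate --- is routine.
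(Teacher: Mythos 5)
Your proposal is correct and follows essentially the same route as the paper: decompose the iterate into the reflection about the good subspace (a local $Z$ on the flag qubit) and the reflection about $\ket{\psi}$ via $U_{\ket{\psi}}^{\dagger}$, a distributed recognition of the all-zero state along a BFS tree in $\bigO(D)$ rounds (your reversible OR is just the complement of the paper's AND of local all-zero checks), a phase flip at the leader, uncomputation, and $U_{\ket{\psi}}$. The round count $\bigO(R_{\ket{\psi}}+D)$ is obtained exactly as in the paper.
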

\begin{proof}
  The amplitude amplification iterate is the product of two reflections: through the space spanned by $\ket{\psi}$ and through the space spanned by the ``Good'' part $\ket{\phi_1}\ket{1}$. The reflection through the ``Good'' part can be implemented with a single $Z$ gate by the node that holds the last register.

  The reflection through $\ket{\psi}$ takes a slightly more careful approach. As in the normal quantum algorithmic implementation, we first apply $U_{\ket{\psi}}^{\dagger}$. This allows us to reflect through $\ket{0}$ instead of $\ket{\psi}$. To do so we need to recognize the all zero state, but the full state of the algorithm can be shared by the nodes and might not be located at a single leader node. We can avoid collecting the full state at the leader by letting each node check whether their local registers are all zero and computing the AND of the result at a leader, requiring $\bigO(D)$ rounds. The leader can then apply a $Z$ gate to implement the reflection. Undoing the computation and applying $U_{\ket{\psi}}$ completes the reflection through the span of $\ket{\psi}$.
\end{proof}

As a result we directly get the following corollary. 
\begin{corollary}[Amplitude Amplification]\label{cor:amplitude_amplification}
  Consider the setting of \autoref{lm:ampamp-itt}.
  \[
  \bigO(\trm{R_{\ket{\psi}} +D}\frac{1}{\sqrt{p}}\log(1/\delta))
  \]
  rounds suffice to obtain $\ket{\phi_1}$ with success probability at least $1-\delta$.
\end{corollary}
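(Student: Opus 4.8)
The plan is to reduce the corollary to an iterated application of \autoref{lm:ampamp-itt}, exactly as in the standard (non-distributed) amplitude amplification argument. First I would recall the textbook fact that, given the state $\ket{\psi} = \sqrt{1-p}\ket{\phi_0}\ket{0} + \sqrt{p}\ket{\phi_1}\ket{1}$ and writing $p = \sin^2(\theta)$ for some $\theta \in (0,\pi/2]$, the amplitude amplification iterate acts as a rotation by angle $2\theta$ in the two-dimensional subspace spanned by $\ket{\phi_0}\ket{0}$ and $\ket{\phi_1}\ket{1}$. Consequently, after $t$ applications of the iterate to $\ket{\psi}$ the amplitude on the ``Good'' subspace is $\sin^2((2t+1)\theta)$, and choosing $t = \bigO(1/\theta) = \bigO(1/\sqrt{p})$ (more precisely $t = \floor{\pi/(4\theta)}$) makes this amplitude a constant bounded away from $0$, say at least $1/2$. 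A measurement of the last qubit then yields outcome $1$ — and hence leaves the remaining registers in (a state close to) $\ket{\phi_1}$ — with constant probability.

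Next I would translate the round count. Preparing $\ket{\psi}$ from the all-zero state costs $R_{\ket{\psi}}$ rounds, and by \autoref{lm:ampamp-itt} each amplitude amplification iterate costs $\bigO(R_{\ket{\psi}}+D)$ rounds; applying $\bigO(1/\sqrt{p})$ of them therefore costs $\bigO((R_{\ket{\psi}}+D)/\sqrt{p})$ rounds, after which one local measurement (at the node holding the last register, with the single-bit outcome broadcast in $\bigO(D)$ rounds so the whole network agrees) completes a single run. This single run succeeds with probability at least a constant $c > 1/2$. To boost the success probability to $1-\delta$ I would repeat the whole procedure $\bigO(\log(1/\delta))$ times independently; since the algorithm can locally verify, via the final measurement, whether the Good subspace was hit, a leader can keep the first successful run, giving overall success probability $1-\delta$ and the claimed $\bigO((R_{\ket{\psi}}+D)\frac{1}{\sqrt{p}}\log(1/\delta))$ round bound.

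One technical point worth addressing is that the exact value of $p$ — and hence the optimal number $t$ of iterates — is typically not known in advance; here I would invoke the standard fixed-point-free workaround of Brassard, H\o yer, Mosca, and Tapp, namely running the procedure with $t$ drawn uniformly from $\{0,1,\dots,T\}$ for geometrically increasing guesses $T$ of $\pi/(4\theta)$, which still yields a constant success probability per outer round at no asymptotic cost; alternatively, if $p$ (or a constant-factor lower bound on it) is known, one simply fixes $t = \bigO(1/\sqrt{p})$ directly. The main obstacle — and the only genuinely distributed ingredient — is already handled by \autoref{lm:ampamp-itt}: recognizing the all-zero state when the full quantum state is spread across the network rather than held by a single node, which the preceding lemma resolves by having each node check its local registers and AND-ing the results at a leader in $\bigO(D)$ rounds. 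Given that, the corollary is a routine composition and the remaining work is purely the classical-analysis bookkeeping sketched above.
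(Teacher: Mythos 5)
Your proposal is correct and follows essentially the same route as the paper, which simply invokes ``the normal proof of amplitude amplification'' together with the observation that success can be verified and the procedure repeated $\bigO(\log(1/\delta))$ times. Your writeup is more detailed than the paper's --- in particular you explicitly handle the case where $p$ is unknown via the Brassard--H{\o}yer--Mosca--Tapp exponential-guessing variant, a point the paper glosses over --- but the underlying argument is the same.
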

\begin{proof}
This follows from the normal proof of amplitude amplification, with the additional note that we can check if we obtained $\phi_1$ (and communicate this to all nodes in the stated number of rounds) and hence can use $\bigO(\log(1/\delta))$ repetitions to boost the success probability.
\end{proof}
This amplitude amplification algorithm can be used to boost the success probability of randomized CONGEST algorithms, even when the random coinflips are performed in many different places in the network. Applying amplitude amplification using our framework from \autoref{sc:mainthm} would require the leader to flip all of the coins and communicate this to the network, which might lead to congestion problems. 

Possibly more interesting is a distributed version of phase estimation. As phase-kickback does not require any communication, the diameter dependence only comes in at the start and end of the algorithm and hence is fully additive.
\begin{lemma}[Phase Estimation]\label{lem:phaseest}
  If the nodes share a state $\ket{\psi}$ and can apply a unitary $U$ using $R$ rounds such that $U\ket{\psi} = e^{\ii \theta}\ket{\psi}$, then
    \[
\bigO(\frac{R}{\eps}\log(1/\delta) +D)
\]
rounds suffice for the leader to learn $\theta$ up to additive error $\eps$ with success probability at least $1-\delta$.
\end{lemma}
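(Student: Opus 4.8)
The plan is to run the textbook phase‑estimation circuit — a uniform superposition on an ancilla register, controlled powers of $U$, an inverse QFT, and a measurement — but arranged so that all diameter‑dependence is confined to one fan‑out at the beginning and one disentangling step at the end. First I would have the leader build a BFS tree ($\bigO(D)$ rounds) and create, at the leader, a control register consisting of $t=\bigO(\log(1/\eps))$ qubits in the uniform superposition $\tfrac{1}{\sqrt{2^t}}\sum_{x=0}^{2^t-1}\ket{x}$, with one such register for each of $m=\bigO(\log(1/\delta))$ independent repetitions; the repetitions are there to boost a single QPE run (which has only constant success probability) via a median. Then, using \autoref{lm:The-Van-Apeldoorn-De-Vos-Lemma} with $q=\bigO(\log(1/\eps)\log(1/\delta))$, I would fan \emph{all} of these control qubits out along the tree so that every node holds its own copy of every control bit, at a cost of $\bigO(D+\tfrac{\log(1/\eps)\log(1/\delta)}{\log n})$ rounds. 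This produces a cat state entangling the leader's registers with the copies distributed across the network.

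Next I would perform the controlled powers. The key point — and the reason the diameter only enters additively — is that once the control cat state has been distributed, a controlled‑$U^{2^j}$ gate (controlled on bit $j$ of some repetition's register) is executed by having each node run its part of the $R$‑round algorithm for $U$ with its local gates replaced by versions controlled on its \emph{own local copy} of bit $j$; the communication pattern of $U$ is unchanged, so this costs only $R$ rounds and needs no further distribution of control information. Moreover, since $\ket{\psi}$ is an eigenstate we have $U^{m}\ket{\psi}=e^{\ii m\theta}\ket{\psi}$, so after every such gate $\ket{\psi}$ returns to itself and the phase is kicked back onto the control copies; hence the single shared copy of $\ket{\psi}$ is never consumed and can be reused across all controlled powers and all repetitions. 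Summing $\sum_{j<t}2^{j}=\bigO(1/\eps)$ applications of $U$ per repetition over $\bigO(\log(1/\delta))$ repetitions gives $\bigO(\tfrac{R}{\eps}\log(1/\delta))$ rounds for this stage, after which the state is $\bigl(\bigotimes_{r}\tfrac{1}{\sqrt{2^t}}\sum_{x}e^{\ii x\theta}\ket{x}\ket{x}_{\mathrm{copies}}\bigr)\otimes\ket{\psi}$.

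Finally I would disentangle by running the fan‑out of \autoref{lm:The-Van-Apeldoorn-De-Vos-Lemma} in reverse, uncomputing all network copies of the control bits in $\bigO(D+\tfrac{\log(1/\eps)\log(1/\delta)}{\log n})$ rounds and leaving each repetition's register, now in the state $\tfrac{1}{\sqrt{2^t}}\sum_{x}e^{\ii x\theta}\ket{x}$, entirely at the leader. The leader then applies the inverse QFT to each register, measures, and outputs the median of the $m$ estimates — all locally, with no communication. With a constant number of extra ancilla qubits each run is within $\eps$ of $\theta$ with probability at least $2/3$, so the median is within $\eps$ with probability $\geq 1-\delta$, failing only with probability exponentially small in $m$. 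Collecting terms gives $\bigO(\tfrac{R}{\eps}\log(1/\delta)+D)$ rounds, absorbing the lower‑order $\tfrac{\log(1/\eps)\log(1/\delta)}{\log n}$ contribution.

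I expect the main obstacle to be exactly the bookkeeping that keeps the $D$‑term additive rather than multiplicative: a naive iterative, one‑control‑qubit‑at‑a‑time scheme would re‑distribute and re‑measure a control cat state for each of the $\bigO(1/\eps)$ rounds of controlled‑$U$, costing $\bigO(D/\eps)$. Making the diameter term purely additive requires (i) batching every control qubit of every repetition into a single up‑front fan‑out and a single closing disentangling step, and (ii) arguing carefully that the eigenstate $\ket{\psi}$ is never disturbed, so it need not be re‑prepared between gates or between repetitions, and that the controlled‑$U$ gate genuinely costs only $R$ rounds given the pre‑distributed control copies.
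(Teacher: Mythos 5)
Your proposal is correct and follows essentially the same route as the paper's proof: fan out the control register(s) for all $\bigO(\log(1/\delta))$ repetitions at once via \autoref{lm:The-Van-Apeldoorn-De-Vos-Lemma}, apply the controlled powers of $U$ with only an $R$-round cost each (so the diameter enters only additively at the fan-out and uncomputation steps), then uncompute, apply the inverse QFT at the leader, and take a median. You merely spell out in more detail the points the paper leaves implicit (the cat-state control of a distributed unitary and the phase kickback onto the distributed copies), so no further comparison is needed.
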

\begin{proof}
  We first consider the case where $\delta = 1/3$. The algorithm follows normal phase estimation: the leader prepares a superposition over numbers $k = 1,\dots,\bigO(1/\eps)$, shares this with the network using \autoref{lm:The-Van-Apeldoorn-De-Vos-Lemma} in $\bigO(D+\log(1/\eps))$ rounds. The network then performs $U^k$ conditioned on the value of $k$ send by the leader. Note that there are no additional $D$ steps here (although $R$ might have a $D$-dependence, depending on the application). The nodes then un-share the superposition over $k$ and the leader applies an inverse Quantum Fourier Transform. The correctness follows form the correctness of standard phase estimation.
  
  To boost the success probability, the algorithm above is repeated $\bigO(\log(1/\delta))$ times and the leader takes the median value of the result. Note that the $k$ values for the different runs can be streamed at the same time using \autoref{lm:The-Van-Apeldoorn-De-Vos-Lemma}. 
\end{proof}

As amplitude estimation is simply phase estimation applied to the amplitude amplification iterate, we get the following corollary.
\begin{corollary}[Amplitude Estimation]
  Consider the setting of \autoref{lm:ampamp-itt} and assume we are given a constant $p_{max}$ such that $p\leq p_{max}$. Then
  \[
\bigO(\trm{R_{\ket{\psi}} + D} \frac{\sqrt{p_{max}}}{\eps}\log(1/\delta))
\]
rounds suffice in the Quantum Congest model to estimate $p$ up to additive error $\eps$ with success probability at least $1-\delta$.
\end{corollary}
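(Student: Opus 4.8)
The approach I would take is exactly the one sketched just before the statement: \emph{amplitude estimation is phase estimation applied to the amplitude amplification iterate}, so the plan is to compose \autoref{lm:ampamp-itt} (which realizes one iterate in $\bigO(R_{\ket{\psi}}+D)$ rounds) with \autoref{lem:phaseest} (distributed phase estimation), and then convert a phase accuracy into an accuracy on $p$ via the standard $\sin^2$ identity. Concretely, I would first run the given $R_{\ket{\psi}}$-round subroutine to prepare $\ket{\psi}$ shared across the network, and let $Q$ denote the amplitude amplification iterate of \autoref{lm:ampamp-itt}, which is a unitary (and hence can be controlled, as in the proof of \autoref{lem:phaseest}) implementable in $R:=\bigO(R_{\ket{\psi}}+D)$ rounds. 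I would then invoke the standard structure of $Q$: on the two-dimensional subspace spanned by $\ket{\phi_0}\ket{0}$ and $\ket{\phi_1}\ket{1}$ it acts as a rotation by $2\theta_a$, where $\theta_a\in[0,\pi/2]$ satisfies $\sin^2(\theta_a)=p$, so $Q$ has eigenvectors $\ket{\psi_+},\ket{\psi_-}$ in that subspace with eigenvalues $e^{\pm 2\ii\theta_a}$, and $\ket{\psi}$ is an equal-weight superposition of the two.

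Running the phase estimation procedure of \autoref{lem:phaseest} with unitary $Q$ and starting state $\ket{\psi}$ returns, in each run, an estimate $\tilde\theta$ of one of the phases $\pm 2\theta_a$ that is $\eps_\theta$-accurate with probability $\geq 2/3$; the leader decodes $\tilde p:=\sin^2(\tilde\theta/2)$ and then, over the $\bigO(\log(1/\delta))$ repetitions, takes the median of the $\tilde p$ values (rather than of the $\tilde\theta$ values as literally written in \autoref{lem:phaseest}). This boosts the success probability to $1-\delta$ and costs $\bigO(\frac{R}{\eps_\theta}\log(1/\delta)+D)$ rounds. To fix $\eps_\theta$, I would use $\sin^2 a-\sin^2 b=\sin(a+b)\sin(a-b)$, $\abs{\sin x}\le\abs{x}$, and $\abs{\sin(a+b)}\le 2\sqrt{p(1-p)}+\abs{a-b}$ with $a=\tilde\theta/2$, $b=\theta_a$, to get $\abs{\tilde p-p}\le \sqrt{p}\,\eps_\theta+\eps_\theta^2/4$. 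Since $p\le p_{max}$, choosing $\eps_\theta=\bigTheta(\min\{\eps/\sqrt{p_{max}},\sqrt{\eps}\})$ makes the right-hand side at most $\eps$; and in the only interesting regime $\eps\le p_{max}$ (for $\eps>p_{max}$ returning $0$ already has error $p\le p_{max}\le\eps$) the minimum equals $\eps/\sqrt{p_{max}}$, so $1/\eps_\theta=\bigTheta(\sqrt{p_{max}}/\eps)$. Substituting $R=\bigO(R_{\ket{\psi}}+D)$ and this $\eps_\theta$ into the phase-estimation cost gives $\bigO(\trm{R_{\ket{\psi}}+D}\frac{\sqrt{p_{max}}}{\eps}\log(1/\delta)+D)$, and the trailing $D$ is dominated by the first term in this regime.

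The one genuine subtlety — not any single inequality — is that \autoref{lem:phaseest} is stated for a true eigenvector of $U$, whereas here $\ket{\psi}$ is a superposition of the two eigenvectors $\ket{\psi_\pm}$ with distinct eigenphases $\pm 2\theta_a$. The concrete manifestation is that one must \emph{decode before taking the median}: phase estimation collapses onto one branch $\ket{\psi_\pm}$ and returns a good estimate of $\pm 2\theta_a$, and while the raw $\tilde\theta$ values across repetitions may cluster around either of two distinct phases, the decoded values $\sin^2(\tilde\theta/2)$ all concentrate around the single value $p$, so the median boost applies to them. A secondary, purely bookkeeping point is ensuring the quadratic term $\eps_\theta^2/4$ does not spoil the $\sqrt{p_{max}}/\eps$ scaling — this is exactly why the $\min$ with $\sqrt{\eps}$ appears and why the clean bound is claimed in the natural regime $\eps=\bigO(p_{max})$. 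Everything else (preparing $\ket{\psi}$ once, controlling the powers $Q^k$, the median-of-repetitions) is already handled inside \autoref{lm:ampamp-itt} and \autoref{lem:phaseest}.
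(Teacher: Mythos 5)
Your proposal is correct and follows the same route as the paper's (very terse) proof: apply the distributed phase estimation of \autoref{lem:phaseest} to the iterate of \autoref{lm:ampamp-itt}, and use the standard conversion from phase accuracy to probability accuracy to see that $\bigO(\sqrt{p_{max}}/\eps)$ iterations suffice. The paper delegates that conversion (and the issue of $\ket{\psi}$ being a superposition of the two eigenvectors of the iterate) entirely to the citation of Brassard et al., whereas you work it out explicitly and correctly, including the point that one should take the median of the decoded values $\sin^2(\tilde\theta/2)$ rather than of the raw phases.
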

\begin{proof}
This follows from applying \autoref{lem:phaseest} to \autoref{lm:ampamp-itt}. Note that phase estimation estimates the rotation angle of the amplitude estimation iterate, not the probability directly. Due to this conversion only $\frac{\sqrt{p_{max}}}{\eps}$ iterations are necessary, as shown in \cite{brassard:ampest}.
\end{proof}

\begingroup
\emergencystretch 2em
\sloppy
\hbadness 10000
\printbibliography
\endgroup

\end{document}